\documentclass[lettersize,journal]{IEEEtran}

\usepackage{amsmath,amssymb,amsfonts,amsthm}
\usepackage{bm}
\usepackage{graphicx}
\usepackage{array}
\usepackage{textcomp}
\usepackage{stfloats}
\usepackage{url}
\usepackage{cite}
\usepackage{mathrsfs}
\usepackage{float}
\usepackage{color}
\usepackage{nicematrix}
\usepackage{booktabs}
\usepackage{algorithm}
\usepackage{algorithmicx}
\usepackage{algpseudocodex}
\usepackage{multirow}
\usepackage{enumerate}
\usepackage{mathtools}
\usepackage[normalem]{ulem}

\allowdisplaybreaks[4]
\newcommand{\perturbed}[1]{{#1}^{\Delta}}

\newcommand{\p}{\mathrm{p}}
\renewcommand{\c}{\mathrm{c}}
\renewcommand{\i}{\mathrm{i}}
\newcommand{\tail}{\mathrm{tail}}

\newcommand{\Frobenius}[1]{\left\|#1\right\|_{\mathrm{F}}}

\newcommand{\define}{\coloneqq}
\newcommand{\set}[1]{\left\{#1\right\}}

\renewcommand{\ker}{\operatorname{ker}}

\newcommand{\twonorm}[1]{\left\|#1\right\|_2}

\newcommand{\abs}[1]{\left|#1\right|}
\newcommand{\krylov}{\mathcal{K}}
\newcommand{\spanspace}[1]{\operatorname{colspan}\left\{#1\right\}}

\newcommand{\tol}{\mathrm{tol}}
\newcommand{\imaginary}{\imath}
\newcommand{\real}{\mathbb{R}}
\newcommand{\complex}{\mathbb{C}}

\renewcommand{\Re}{\operatorname{Re}}

\newcommand{\diag}[1]{\operatorname{diag}\left\{#1\right\}}

\newcommand{\tran}[1]{#1^\top}
\newcommand{\inv}[1]{#1^{-1}}

\newcommand{\invtran}[1]{#1^{-\top}}

\newcommand{\conjtran}[1]{#1^\ast}
\newcommand{\invconjtran}[1]{#1^{-\ast}}

\newcommand{\secref}[1]{Section \ref{#1}}
\newcommand{\thmref}[1]{Theorem \ref{#1}}

\newcommand{\propref}[1]{Proposition \ref{#1}}

\newcommand{\tableref}[1]{Table \ref{#1}}
\newcommand{\figref}[1]{Figure \ref{#1}}
\newcommand{\appref}[1]{Appendix \ref{#1}}

\makeatletter
\newif\ifrevdel@complex
\protected\long\def\revdel#1{%
    \begingroup
    \revdel@complexfalse
    \edef\revdel@text{\detokenize{#1}}%
    \edef\revdel@begin{\detokenize{\begin}}%
    \edef\revdel@check{\noexpand\in@{\revdel@begin}{\revdel@text}}%
    \revdel@check
    \ifin@\revdel@complextrue\fi
    \edef\revdel@cite{\detokenize{\cite}}%
    \edef\revdel@check{\noexpand\in@{\revdel@cite}{\revdel@text}}%
    \revdel@check
    \ifin@\revdel@complextrue\fi
    \edef\revdel@texttt{\detokenize{\texttt}}%
    \edef\revdel@check{\noexpand\in@{\revdel@texttt}{\revdel@text}}%
    \revdel@check
    \ifin@\revdel@complextrue\fi
    \edef\revdel@dollar{\detokenize{$}}%
    \edef\revdel@check{\noexpand\in@{\revdel@dollar}{\revdel@text}}%
    \revdel@check
    \ifin@\revdel@complextrue\fi
    \ifrevdel@complex
        {\color{red}#1}%
    \else
        \textcolor{red}{\sout{#1}}%
    \fi
    \endgroup
}
\makeatother

\theoremstyle{plain}
\newtheorem{theorem}{Theorem}[section]

\newtheorem{proposition}[theorem]{Proposition}

\theoremstyle{definition}

\theoremstyle{remark}
\newtheorem{remarkx}{Remark}[section]

\floatname{algorithm}{Algorithm}

\renewcommand{\algref}[1]{Algorithm \ref{#1}}

\begin{document}

\title{FlexRC: A Flexible Multi-Point Model Order Reduction Method for Many-Port RC Networks}
\author{Yuncheng Xu, Siyuan Yin, Lin Liu, Fan Yang,~\IEEEmembership{Member,~IEEE}, Xuan Zeng,~\IEEEmembership{Senior Member,~IEEE}, Chengtao An, and Yangfeng Su%
\thanks{Yuncheng Xu, Siyuan Yin, and Yangfeng Su are with the School of Mathematical Sciences, Fudan University, Shanghai, China.}
\thanks{Lin Liu and Chengtao An are with Empyrean, China.}
\thanks{Fan Yang and Xuan Zeng are with the State Key Laboratory of Integrated Chips and Systems, College of Integrated Circuits and Micro-Nano Electronics, Fudan University, Shanghai, China.}
\thanks{Corresponding authors: Chengtao An (e-mail: ancht@empyrean.com.cn) and Yangfeng Su (e-mail: yfsu@fudan.edu.cn).}
}

% The paper headers
% \markboth{Manuscript Submitted to IEEE Transactions on Computer-Aided Design of Integrated Circuits and Systems}%
% {Shell \MakeLowercase{\textit{et al.}}: A Sample Article Using IEEEtran.cls for IEEE Journals}

% \IEEEpubid{0000--0000/00\$00.00~\copyright~2021 IEEE}
% Remember, if you use this you must call \IEEEpubidadjcol in the second
% column for its text to clear the IEEEpubid mark.

\maketitle

\begin{abstract}
    Efficient model order reduction for many-port resistor-capacitor (RC) networks is essential in post-layout circuit simulation. Existing high-accuracy elimination-based methods have certain limitations, such as fixed frequency points, large reduced-order models, or high reduction cost. This paper proposes FlexRC, a flexible multi-point model order reduction method for many-port RC networks. FlexRC starts from the same elimination step as previous methods, and then constructs a nonorthogonal projection basis by a modified block rational Arnoldi process to generate a sparse banded reduced model. FlexRC features three adjustable components: user-specified frequency points, a tolerance-controlled port-reduction technique for the internal subsystem, and an optional sparsity-control strategy. We discuss passivity under port-reduction perturbations, analyze moment matching, and provide a conservative error estimate for port reduction. Numerical experiments on industrial RC examples and IBM power-grid examples demonstrate the effectiveness of FlexRC in terms of reduction time and transient simulation time. 
\end{abstract}

\begin{IEEEkeywords}
    Many-port RC networks, model order reduction, multi-point moment matching, Krylov subspace methods, port reduction. 
\end{IEEEkeywords}

\section{Introduction}
In modern IC design, interconnect effects have become a dominant factor in determining whole-chip performance \cite{Chen1998interconnect, Silva2007issues, Nassif2008power}. Massive parasitic models, especially resistor-capacitor (RC) networks, are extracted and connected to nonlinear devices for post-layout simulation. 
The large numbers of nodes and ports in RC parasitic networks make direct nonlinear simulation computationally prohibitive and time-consuming. 

Model order reduction (MOR) is frequently used to speed up the simulation of interconnect circuits. MOR methods applied directly to the full nonlinear system, such as proper orthogonal decomposition (POD) \cite{AK2004POD, RP2003POD, Pinnau2008POD}, are computationally expensive. Therefore, RC reduction (RCR), which reduces the linear time-invariant RC networks before nonlinear simulation, has become the mainstream approach. An ideal RC reduction technique should yield accurate reduced-order models while preserving essential properties such as passivity \cite{OCP1998PRIMA,OCP1999PRIMA} and input-output structure \cite{YZSZ2007RLCSYN}. The primary challenge in RCR lies in the large number of ports connecting the RC network to nonlinear devices. The input-output structure associated with these ports must be preserved during the reduction process to ensure that the reduced-order model can still be connected to the nonlinear devices for further analysis. 

High-accuracy MOR methods are difficult to apply to many-port RC networks. TBR-like methods \cite{RS2010PABTEC, RS2011lyapunov, PS2004PMTBR} are limited by slowly decaying Hankel singular values and the high cost of Lyapunov equations \cite{SP2004exploiting}, while factor division algorithms \cite{PP2018model, PP2019model} become impractical when both node and port counts are large. 

Traditional Krylov subspace methods, notably PRIMA \cite{OCP1998PRIMA,OCP1999PRIMA}, are widely used but become inefficient for many-port networks because the dense projection matrix often produces dense reduced models \cite{silva2007outstanding}. Port-compression methods such as SVDMOR \cite{Feldmann2004SVDMOR}, ESVDMOR \cite{LTYM2008ESVDMOR}, and RECMOR \cite{FL2004RECMOR} rely on port correlations that are often weak in practical networks \cite{YTZCS2012decentralized}. Furthermore, these methods do not preserve the input-output structure. Alternative approaches such as SPRIM \cite{Freund2004SPRIM} and RLCSYN \cite{YZSZ2007RLCSYN} preserve this structure, but suffer from increased model orders. 

State-of-the-art RC reduction methods primarily rely on elimination-based methods such as PACT \cite{KY1996PACT}, TICER \cite{Sheehan1999TICER}, and SIP \cite{YDZP2008SIP}. Instead of constructing an explicit projection matrix, these methods use sparse Gaussian elimination to generate reduced-order models that naturally preserve the original input-output structure. Enhanced by the partition strategy in SparseRC \cite{IRS2011SparseRC}, these techniques often work well. However, they typically match only two DC moments, which may be insufficient for high-accuracy applications \cite{YDZP2008SIP}. While multi-point matching was introduced in \cite{YDZP2008SIP}, it may result in singular reduced matrices. TurboMOR-RC \cite{OT2016TurboMOR} extends SIP to higher-order moment matching and generates banded sparse reduced-order models that can be solved efficiently, but its frequency point is fixed at $s_0 = 0$. SMP-RCR \cite{YXLYZAS2025SMPRCR} proposes a sparse multi-point moment-matching method, but it requires the explicit generation of large dense intermediate blocks, which impairs its practicality for large-scale cases. 

This paper proposes FlexRC, a flexible multi-point model order reduction method for many-port RC networks. Starting from the standard port-preserving elimination step, FlexRC applies a port-reduction technique to the coupling block between port nodes and internal nodes, thereby reducing the port number of the internal subsystem. A block rational Arnoldi process with incomplete orthogonalization is then used to construct a nonorthogonal basis for the projection space. This construction allows FlexRC to generate sparse banded reduced-order models similar to those of TurboMOR-RC. FlexRC is also compatible with the sparsity-control strategy used in SMP-RCR. Thus FlexRC combines user-specified frequency points with a TurboMOR-RC-like sparse block structure, without explicitly constructing the dense intermediate blocks required in SMP-RCR. 

The flexibility of FlexRC comes from three adjustable components. First, the frequency points can be specified by the user to improve accuracy for the target transient response. Second, the port-reduction tolerance $\tol$ controls the error introduced by the port reduction and provides a practical trade-off between accuracy and reduced order. With suitable choices, FlexRC can significantly reduce the order with only a small loss of accuracy. Even in the conservative case $\tol = 0$, the resulting reduced models have simulation efficiency comparable to that of TurboMOR-RC. Third, the sparsity-control variant retains more nodes during elimination to generate a sparser reduced model and avoid overly dense reduced matrices. These mechanisms are especially useful for reducing the subsequent simulation time. 

In addition to the proposed reduction framework, we discuss passivity under port-reduction perturbations. For the exact reduction, passivity follows from the congruence structure of the construction. When port reduction is applied, the reduced capacitance matrix is symmetrically perturbed, and passivity is preserved if this perturbation is sufficiently small. We also analyze moment matching and provide a conservative error estimate for port reduction. Beyond the standard moment-matching analysis, we also consider RC networks with singular conductance matrices and prove a Laurent moment matching result for elimination-based RC reduction methods. To the best of our knowledge, this is the first treatment of this case for elimination-based RC reduction methods. Numerical experiments on industrial RC examples and IBM power-grid examples demonstrate the effectiveness of FlexRC in terms of reduction time and simulation time. The results also show the benefits of the three adjustable components in our flexible framework. 

The rest of this paper is organized as follows. In \secref{sec:preliminary}, we describe the RC reduction problem and review projection-based and elimination-based methods. In \secref{sec:proposed method}, we present the FlexRC method and its implementation details. In \secref{sec:analysis}, we analyze the passivity, moment-matching properties, and port-reduction error bound. Numerical experiments are reported in \secref{sec:numerical experiments}. Finally, conclusions are drawn in \secref{sec:conclusion}, and some mathematical proofs are provided in the Appendix. 

\section{RC reduction and previous methods}
\label{sec:preliminary}

This section introduces the basic notation for RC reduction and the concept of moment matching, followed by a review of projection-based Krylov subspace methods and elimination-based methods for RC reduction. 

\subsection{RC reduction and moment matching}

Using modified nodal analysis \cite{HRB1975MNA}, an RC network with $N$ nodes and $p$ ports can be described by the following differential-algebraic equations 
\begin{align}
    \label{eq:circuit equations in time domain}
    \left\{\begin{aligned}
        & C\dot{x}(t) + Gx(t) = Bu(t), \\
        & y(t) = \tran{B}x(t), 
    \end{aligned}\right.
\end{align}
where the input vector $u$ collects all port currents, the output vector $y$ collects all port voltages, and the vector $x$ contains all nodal voltages. Matrices $G, C \in \real^{N \times N}$ represent the conductance and capacitance matrices, respectively, and both are symmetric positive semidefinite. The incidence matrix $B \in \real^{N \times p}$ relates the input sources to the corresponding nodes. In this paper, we assume that the nodes have been permuted such that the port nodes are ordered first and the internal nodes follow. In this ordering, $B$ can be written as 
\begin{align*}
    B = \begin{bmatrix}
        I_p \\ 0_{(N-p) \times p} 
    \end{bmatrix}. 
\end{align*}
Thus the first $p$ nodal variables correspond exactly to the port nodes. 

The objective of RC reduction (RCR) is to approximate \eqref{eq:circuit equations in time domain} with a reduced-order model of order $r \ll N$, 
\begin{align}
    \label{eq:reduced circuit equations in time domain}
    \left\{\begin{aligned}
        & \tilde{C}\dot{\tilde{x}}(t) + \tilde{G}\tilde{x}(t) = \tilde{B}u(t), \\
        & \tilde{y}(t) = \tran{\tilde{B}}\tilde{x}(t), 
    \end{aligned}\right.
\end{align}
where $\tilde{G}, \tilde{C} \in \real^{r \times r}$ and $\tilde{B} \in \real^{r \times p}$. In addition to accuracy, the reduced matrices $\tilde{G}$ and $\tilde{C}$ must remain symmetric positive semidefinite to ensure symmetry, stability, and passivity \cite{YDZP2008SIP,IRS2011SparseRC,RS2011lyapunov}. Furthermore, the primary challenge in RCR stems from the large number of ports connecting the parasitic RC network to nonlinear devices. The input-output structure must be preserved so that the reduced model can be synthesized into a compatible equivalent RC netlist for further analysis \cite{YZSZ2007RLCSYN,IRS2011SparseRC}. Consequently, the reduced incidence matrix $\tilde{B}$ should retain the form 
\begin{align*}
    \tilde{B} = \begin{bmatrix}
        I_p \\ 0_{(r-p) \times p} 
    \end{bmatrix}. 
\end{align*}

A standard approach to ensuring approximation accuracy is through moment matching. The transfer functions of the original model \eqref{eq:circuit equations in time domain} and the reduced-order model \eqref{eq:reduced circuit equations in time domain} are given by $H(s) = \tran{B}\inv{(G + s C)}B$ and $\tilde{H}(s) = \tran{\tilde{B}}\inv{(\tilde{G} + s \tilde{C})}\tilde{B}$, respectively. Given an expansion point $s_0 \geq 0$, let $A(s_0) \define G + s_0 C$ and assume that $A(s_0)$ is nonsingular. Then the Taylor expansion of $H(s)$ around $s = s_0$ can be written as 
\begin{align*}
    H(s) & = \sum_{k=0}^{\infty} M_k(s_0) (s-s_0)^k,
\end{align*}
where the $k$-th order moments $M_k(s_0)$ are given by 
\begin{align*}
    M_k(s_0) & \define (-1)^k \tran{B} (\inv{A(s_0)}C)^k \inv{A(s_0)}B. 
\end{align*}
Similarly, $\tilde{M}_k(s_0)$ can be defined for the reduced-order model. The reduced system is said to match the first $m$ moments around $s = s_0$ if 
\begin{align*}
    \tilde{M}_k(s_0) = M_k(s_0), \qquad k = 0, 1, \ldots, m-1. 
\end{align*}
For RC reduction, a particularly important expansion point is $s_0 = 0$, which corresponds to the DC moments \cite{OCP1998PRIMA}. 

\subsection{Projection-based methods}

Projection-based Krylov subspace methods provide a standard way to achieve moment matching. The order-$q$ block Krylov subspace is defined as $\krylov_q(A, B) \define \spanspace{B, AB, \ldots, A^{q-1}B}$. PRIMA \cite{OCP1998PRIMA, OCP1999PRIMA} is a classical and widely used Krylov-subspace MOR method in circuit simulation. In PRIMA, a block Arnoldi process is performed to generate an orthogonal matrix $V$ whose columns span the Krylov subspace $\krylov_q(\inv{A(s_0)}C, \inv{A(s_0)}B)$. The reduced-order model is subsequently obtained by the congruence transformations 
\begin{align*}
    \tilde{C} = \tran{V}CV, \quad \tilde{G} = \tran{V}GV, \quad \tilde{B} = \tran{V}B. 
\end{align*}
The following standard result from \cite{LS2012krylov} shows that the resulting reduced model matches the first $2q$ moments of the original system around $s = s_0$. 
\begin{theorem}
    \label{thm:moment-matching for RC reduction}
    Suppose $G + s_0 C$ and $\tilde{G} + s_0 \tilde{C}$ are both nonsingular. If the subspace $\spanspace{V}$ contains the Krylov subspace $\krylov_{q}(\inv{(G + s_0 C)}C, \inv{(G + s_0 C)}B)$, then the reduced-order system matches the first $2q$ moments around $s = s_0$. 
\end{theorem}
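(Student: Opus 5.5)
The plan is the standard two-sided (congruence) moment-matching argument, exploiting the symmetry of $G$ and $C$. Write $A \define G + s_0 C$ and $\tilde A \define \tilde G + s_0 \tilde C = \tran{V} A V$; the latter is nonsingular by hypothesis. Since $\tilde B = \tran{V}B$, we have
\begin{align*}
    \tilde M_k(s_0) = (-1)^k \tran{B} V (\inv{\tilde A}\tilde C)^k \inv{\tilde A}\tran{V}B.
\end{align*}
Introduce the oblique projector $P \define V \inv{\tilde A}\tran{V}A$. It satisfies $PV = V$, so $Px = x$ for every $x \in \spanspace{V}$. Note that $\tran{V}A$ need not have full rank issues, because only the nonsingularity of $\tilde A$ is used.

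\textbf{Step 1 (right-hand vectors).} Let $R_j \define (\inv{A}C)^j \inv{A} B$. By hypothesis $R_j \in \spanspace{V}$ for $0 \le j \le q-1$. I will prove by induction that
\begin{align*}
    V(\inv{\tilde A}\tilde C)^j \inv{\tilde A}\tran{V}B = R_j, \qquad 0 \le j \le q-1.
\end{align*}
For $j=0$, write $V\inv{\tilde A}\tran{V}B = V\inv{\tilde A}\tran{V}A\inv{A}B = P R_0 = R_0$. For the inductive step,
\begin{align*}
    V\inv{\tilde A}\tilde C\,(\inv{\tilde A}\tilde C)^{j-1}\inv{\tilde A}\tran{V}B = V\inv{\tilde A}\tran{V} C\, V(\cdots),
\end{align*}
and by the induction hypothesis $V(\cdots) = R_{j-1}$. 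This gives $V\inv{\tilde A}\tran{V}A(\inv{A}C R_{j-1}) = P R_j = R_j$. Equivalently, writing $R_j = V z_j$, one has $z_j = (\inv{\tilde A}\tilde C)^j\inv{\tilde A}\tran{V}B$.

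\textbf{Step 2 (splitting the moment).} For $0 \le k \le 2q-1$, write $k = i + j + 1$ with $0 \le i, j \le q-1$ whenever $k \ge 1$; the case $k = 0$ is just $\tran{B}R_0$ obtained from Step 1. Split
\begin{align*}
    (\inv{\tilde A}\tilde C)^k\inv{\tilde A} = \left[(\inv{\tilde A}\tilde C)^{i}\inv{\tilde A}\right]^{\top\text{-type}} \tilde C\, (\inv{\tilde A}\tilde C)^j\inv{\tilde A},
\end{align*}
using the identity $(\inv{\tilde A}\tilde C)^{i+1+j}\inv{\tilde A} = \inv{\tilde A}(\tilde C\inv{\tilde A})^{i}\,\tilde C\,(\inv{\tilde A}\tilde C)^j\inv{\tilde A}$. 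Since $\tilde A$ and $\tilde C$ are symmetric (they are congruences of symmetric $A$ and $C$), we have $\inv{\tilde A}(\tilde C\inv{\tilde A})^i = \tran{[(\inv{\tilde A}\tilde C)^i\inv{\tilde A}]}$. Therefore
\begin{align*}
    \tilde M_k = (-1)^k \tran{z_i}\tilde C z_j = (-1)^k \tran{R_i} C R_j.
\end{align*}
The same manipulation applied to the full model, using the symmetry of $A$ and $C$, gives $M_k = (-1)^k\tran{R_i}CR_j$, so $\tilde M_k = M_k$.

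\textbf{Main obstacle.} The main subtlety is Step 1 in the case where $V$ has more columns than the Krylov space, or is not orthonormal. This is handled through the projector $P$, which requires only that $\tilde A$ be invertible. The remaining work is bookkeeping of the index split, so that every $k \le 2q-1$ is covered with $i, j \le q-1$; the split $k = i + j + 1$ reaches exactly $k \le 2q-1$, which together with $k = 0$ yields the first $2q$ moments.
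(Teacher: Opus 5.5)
Your proof is correct. The paper does not prove this statement itself; it quotes it as a standard result from the Krylov model-reduction literature. Your argument is the standard proof behind that citation. The oblique projector $P = V\inv{\tilde A}\tran{V}A$ gives $V z_j = R_j$ for $0 \le j \le q-1$. The symmetric splitting then gives $\tilde M_{i+j+1} = (-1)^{i+j+1}\tran{z_i}\tilde C z_j = (-1)^{i+j+1}\tran{R_i} C R_j = M_{i+j+1}$.

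Two features of your argument are worth keeping in view.
\begin{enumerate}
\item It uses only $\spanspace{V}$ and the invertibility of $\tran{V}AV$, never orthonormality. Invertibility already forces $V$ to have full column rank. This independence from orthonormality is exactly what the paper relies on when it later applies the theorem to the nonorthogonal basis $\tilde{V}_{\i}$.
\item It uses the symmetry of $G$ and $C$ essentially. Without symmetry, a one-sided projection guarantees only $q$ moments, so state that assumption explicitly rather than leaving it to the RC context.
\end{enumerate}

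Two small cleanups. The remark that $\tran{V}A$ ``need not have full rank issues'' is garbled and can be dropped. The informal ``$\top$-type'' bracket is unnecessary, since the explicit identity you write next already does the work.
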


The reduced-order model generated by PRIMA is passive, but it does not maintain the specific input-output structure required for RC networks. To address this limitation, methods such as SPRIM \cite{Freund2004SPRIM} and RLCSYN \cite{YZSZ2007RLCSYN} partition the matrix $V$ into two blocks denoted by $V_{\p}$ and $V_{\i}$ 
\begin{align*}
    V = \begin{bmatrix}
        V_{\p} \\ V_{\i} 
    \end{bmatrix}, \qquad 
    V_{\p} \in \real^{p \times r}, \quad V_{\i} \in \real^{(N-p) \times r}. 
\end{align*}
The projection matrix is then constructed as 
\begin{align*}
    V_{\mathrm{R}} = \begin{bmatrix}
        I_p & \\
        & \mathrm{orth}(V_{\i}) 
    \end{bmatrix},
\end{align*}
where $\mathrm{orth}(V_{\i})$ denotes the orthogonalization of the submatrix $V_{\i}$. The moment-matching property remains satisfied by the reduced-order model, but the reduced order increases by $p$, and the orthogonalization of $V_{\i}$ can be expensive. 

\subsection{Elimination-based methods}

The reduced matrices $\tilde{G}$ and $\tilde{C}$ generated by projection-based subspace methods are typically dense, so solving the reduced-order model can still be computationally expensive. To address this issue, elimination-based methods \cite{YDZP2008SIP,IRS2011SparseRC,OT2016TurboMOR} have been proposed. These methods naturally preserve the port nodes and the original input-output structure during elimination. 

SIP \cite{YDZP2008SIP} performs projections implicitly through Gaussian elimination. The circuit equations \eqref{eq:circuit equations in time domain} are partitioned into port nodes and internal nodes as follows 
\begin{align}
    \label{eq:blocked circuit equations in frequency domain}
    \left(\begin{bmatrix}
        G_{\p} & \tran{G_{\c}} \\
        G_{\c} & G_{\i} 
    \end{bmatrix} + s 
    \begin{bmatrix}
        C_{\p} & \tran{C_{\c}} \\
        C_{\c} & C_{\i} 
    \end{bmatrix}\right)
    \begin{bmatrix}
        x_{\p} \\ x_{\i} 
    \end{bmatrix} = 
    \begin{bmatrix}
        I_p \\ 0 
    \end{bmatrix} u, 
\end{align}
where $G_{\p}, C_{\p} \in \real^{p \times p}$ represent the contributions between port nodes; $G_{\i}, C_{\i} \in \real^{(N-p) \times (N-p)}$ represent the contributions between internal nodes; and $G_{\c}, C_{\c} \in \real^{(N-p) \times p}$ describe the coupling between internal and port nodes. 

For the DC expansion point $s_0 = 0$, assume that $G_{\i}$ is nonsingular and define $V_1 = \inv{G_{\i}}G_{\c}$. SIP constructs the transformation matrix 
\begin{align*}
    M = \begin{bmatrix}
        I_p & 0 \\
        -V_1 & I_{N-p} 
    \end{bmatrix}, 
\end{align*}
and applies congruence transformations with $M$ to the original system \eqref{eq:circuit equations in time domain}, yielding 
\begin{align}
    \label{eq:SIP congruence transformation}
    \tran{M}GM & = \begin{bmatrix}
        \tilde{G}_{\p} & 0 \\ 
        0 & G_{\i} 
    \end{bmatrix}, \quad 
    \tran{M}CM & = \begin{bmatrix}
        \tilde{C}_{\p} & \tran{\tilde{C}_{\c}} \\ 
        \tilde{C}_{\c} & C_{\i} 
    \end{bmatrix}. 
\end{align}
This congruence transformation eliminates the coupling block $G_{\c}$ in the conductance matrix and can be implemented by Gaussian elimination. During this process, the internal submatrices $G_{\i}, C_{\i}$ and the incidence matrix $B$ remain invariant. The reduced-order model is then obtained by eliminating the internal nodes: 
\begin{align*}
    \tilde{G} = \tilde{G}_{\p}, \quad \tilde{C} = \tilde{C}_{\p}, \quad \tilde{B} = I_p. 
\end{align*}
The reduced-order system generated by SIP is passive and matches the first two moments around $s = 0$. 

TurboMOR-RC \cite{OT2016TurboMOR} extends SIP to match $2q$ moments at $s = 0$. Starting with \eqref{eq:SIP congruence transformation}, TurboMOR-RC first normalizes $G_{\i}$ to $I_{N-p}$ using the Cholesky factor $L$ of $G_{\i}$. The coupling matrix is then decomposed by Householder QR as 
\begin{align*}
    \inv{L}\tilde{C}_{\c} = Q^{(2)}
    \begin{bmatrix}
        R^{(2)} \\ 0
    \end{bmatrix}, 
\end{align*}
where $Q^{(2)} \in \real^{(N-p) \times (N-p)}$ is the product of a series of Householder reflections, and $R^{(2)} \in \real^{p\times p}$ is upper triangular. After the corresponding congruence transformation, the original system can be partitioned as 
\begin{align*}
    G^{(2)} & = \begin{bmatrix}
        \tilde{G}_{\p} & & \\
        & I_p & \\
        & & I_{N-2p} 
    \end{bmatrix}, \\
    C^{(2)} & = \begin{bmatrix}
        \tilde{C}_{\p} & \tran{\left(R^{(2)}\right)} & 0 \\
        R^{(2)} & C_{\p}^{(2)} & \tran{\left(C_{\c}^{(2)}\right)} \\
        0 & C_{\c}^{(2)} & C_{\i}^{(2)} 
    \end{bmatrix}. 
\end{align*}
Repeating this procedure for another $q-2$ stages and retaining the first $qp$ variables yields the reduced-order model of TurboMOR-RC. The resulting reduced matrices have a banded structure, with both lower and upper bandwidths equal to $p$, so the reduced-order model can be solved more efficiently. 

In practical implementation, the matrices that are not retained in the final reduced-order model need not be explicitly generated. For example, for $q = 2$, the discarded blocks $C_{\c}^{(2)}$ and $C_{\i}^{(2)}$ are not needed and therefore need not be generated. This is important because these discarded blocks can be very large and dense, so explicitly forming them can require substantial computation time and memory. 

SMP-RCR \cite{YXLYZAS2025SMPRCR} is a multi-point moment-matching method. Unlike TurboMOR-RC, whose frequency point is fixed at $s_0 = 0$, SMP-RCR allows the frequency sequence to be chosen freely to improve accuracy. It also applies a deflation strategy and sparsity-control technique to improve the efficiency of the reduced-order model. However, SMP-RCR must explicitly generate the intermediate discarded blocks that are not retained in the final reduced-order model. Since these blocks can be very large and dense, the reduction time and memory cost can become prohibitive for large-scale cases. 

\section{Proposed method}
\label{sec:proposed method}

In this section, we present FlexRC, a multi-point moment-matching reduction method for RC networks. Let 
\begin{align*}
    \mathcal{S} \define [s_1, s_2, \ldots, s_q]
\end{align*}
denote the prescribed sequence of frequency points, where $s_1, s_2, \ldots, s_q$ are nonnegative real numbers. For RC reduction, the first frequency point is chosen to be $s_1 = 0$ to match the DC moments. To make the construction clear, we first describe the two-point case and then extend it to $q \geq 3$. The corresponding moment-matching property and error estimates are analyzed in the next section. 

\subsection{Two-point case}
We first describe the reduction procedure for the two-point case with frequency sequence $[s_1 = 0, s_2]$. Starting from \eqref{eq:SIP congruence transformation}, the conductance coupling between the port variables and the internal variables has been eliminated, while the remaining interaction is carried by the capacitive block $\tilde{C}_{\c}$. Similar to TurboMOR-RC, FlexRC keeps the effect of the internal subsystem and reduces the following internal subsystem at the additional frequency point: 
\begin{align}
    \label{eq:internal subsystem}
    \Sigma_{\i}: \left\{\begin{aligned} 
        & C_{\i} \dot{x}_{\i} + G_{\i} x_{\i} = \tilde{C}_{\c}u_{\i}, \\ 
        & y_{\i} = \tran{\tilde{C}_{\c}}x_{\i}. 
    \end{aligned}\right. 
\end{align}

Before reduction, we first perform port reduction on the internal coupling matrix $\tilde{C}_{\c}$. Instead of performing a full QR decomposition in TurboMOR-RC or SMP-RCR, we compute an economic QR decomposition 
\begin{align*}
    \tilde{C}_{\c} = QR, 
\end{align*}
where $Q \in \real^{(N-p) \times p}$ has orthonormal columns and $R \in \real^{p \times p}$ is upper triangular. 
The subsequent port reduction is performed by row norms. Let 
\begin{align*}
    \rho_j \define \twonorm{R(j, :)}, \qquad j = 1, \ldots, p, 
\end{align*}
and let $\pi$ be a permutation such that $\rho_{\pi_1} \leq \cdots \leq \rho_{\pi_p}$. Given a prescribed port-reduction tolerance $\tol$, we choose the largest integer $m \in \set{0,1,\ldots,p}$ such that 
\begin{align*}
    \frac{\sqrt{\rho_{\pi_1}^2 + \cdots + \rho_{\pi_m}^2}}{\Frobenius{R}} \leq \tol. 
\end{align*}
The rows indexed by 
\begin{align*}
    \mathcal{J} \define \set{\pi_1,\ldots,\pi_m}
\end{align*}
are discarded, and the remaining rows 
\begin{align*}
    \mathcal{I} \define \set{1,\ldots,p}\setminus\mathcal{J}
\end{align*}
are retained. Set $p_{\i} = \abs{\mathcal{I}}$ and define 
\begin{align*}
    B_{\i} = Q(:,\mathcal{I}), \qquad R_{\i} = R(\mathcal{I}, :). 
\end{align*}
After discarding these rows, we obtain the low-rank approximation 
\begin{align*}
    \tilde{C}_{\c} \approx B_{\i}R_{\i}. 
\end{align*}
Since $Q$ has orthonormal columns, the truncation error satisfies 
\begin{align*}
    \frac{\Frobenius{\tilde{C}_{\c} - B_{\i}R_{\i}}}{\Frobenius{\tilde{C}_{\c}}}
    = \frac{\Frobenius{Q(:,\mathcal{J})R(\mathcal{J}, :)}}{\Frobenius{QR}}
    = \frac{\Frobenius{R(\mathcal{J}, :)}}{\Frobenius{R}}
    \leq \tol. 
\end{align*}
Define 
\begin{align*}
    \Delta \define B_{\i}R_{\i} - \tilde{C}_{\c}. 
\end{align*}
Thus the port reduction step can be interpreted as a relative backward perturbation of the internal coupling matrix: 
\begin{align*}
    \Frobenius{\Delta} / \Frobenius{\tilde{C}_{\c}} \leq \tol. 
\end{align*}
The corresponding perturbed internal subsystem is given by 
\begin{align}
    \label{eq:perturbed internal subsystem}
    \perturbed{\Sigma}_{\i}: \left\{\begin{aligned} 
        & (G_{\i} + s C_{\i})x_{\i} = (B_{\i}R_{\i})u_{\i}, \\ 
        & y_{\i} = \tran{(B_{\i}R_{\i})}x_{\i}. 
    \end{aligned}\right. 
\end{align}
The following reduction is applied to this perturbed subsystem. Since only the $p_{\i}$ columns of $B_{\i}$ are used to generate the internal basis, the dimension of the reduced-order model is reduced accordingly. 

For the second frequency point $s_2$, we apply a Krylov-subspace projection to the perturbed internal subsystem. To preserve the sparsity inherited from $R_{\i}$, however, we do not construct an orthonormal block basis. First, we compute the $0$-th moment block by 
\begin{align*}
    \hat{V}_2 D = \inv{(G_{\i} + s_2 C_{\i})}B_{\i}, 
\end{align*}
where $D$ is a nonsingular diagonal matrix induced by column normalization. In practice, $\hat{V}_2$ is obtained by normalizing each column of $\inv{(G_{\i} + s_2 C_{\i})}B_{\i}$ separately. This step is only a column scaling to improve numerical stability, so $\hat{V}_2$ is not an orthonormal basis in general. Let 
\begin{align*}
    T_2 = \tran{B_{\i}} \hat{V}_2. 
\end{align*}
Note that 
\begin{align*}
    T_2 = \tran{B_{\i}}\inv{(G_{\i} + s_2 C_{\i})}B_{\i}\inv{D}. 
\end{align*}
The matrix $G_{\i} + s_2 C_{\i}$ is symmetric positive definite, $B_{\i}$ has full column rank, and $D$ is nonsingular. Therefore, the matrix $T_2$ is nonsingular. We then apply the small change of basis 
\begin{align*}
    V_2 = \hat{V}_2\inv{T_2}. 
\end{align*}
After this basis change, it holds that 
\begin{align*}
    \tran{V_2}B_{\i} = I_{p_{\i}}. 
\end{align*}
Projecting the perturbed subsystem \eqref{eq:perturbed internal subsystem} with $V_2$ gives 
\begin{align*}
    \tilde{G}_{\i} = \tran{V_2}G_{\i}V_2, \qquad 
    \tilde{C}_{\i} = \tran{V_2}C_{\i}V_2, \qquad 
    \tilde{B}_{\i} = \tran{V_2}B_{\i}R_{\i}. 
\end{align*}
With $\tran{V_2}B_{\i} = I_{p_{\i}}$, the reduced coupling matrix is exactly 
\begin{align*}
    \tilde{B}_{\i} = R_{\i}. 
\end{align*}
Thus the sparsity obtained from the truncated coefficient matrix $R_{\i}$ is inherited directly by the reduced coupling block. Since $s_1 = 0$, the final reduced-order model is 
\begin{align*}
    \tilde{G} = 
    \begin{bmatrix}
        \tilde{G}_{\p} & 0 \\ 
        0 & \tilde{G}_{\i}
    \end{bmatrix}, \qquad 
    \tilde{C} = 
    \begin{bmatrix}
        \tilde{C}_{\p} & \tran{R_{\i}} \\ 
        R_{\i} & \tilde{C}_{\i} 
    \end{bmatrix}, \qquad 
    \tilde{B} = 
    \begin{bmatrix}
        I_p \\ 0
    \end{bmatrix}. 
\end{align*}

In practical implementation, we do not explicitly form $V_2 = \hat{V}_2\inv{T_2}$. Instead, we first project the internal matrices using $\hat{V}_2$, and then apply the small transformation induced by $T_2$ to the reduced matrices. More precisely, if 
\begin{align*}
    \hat{G}_{\i} = \tran{\hat{V}_2}G_{\i}\hat{V}_2, \qquad 
    \hat{C}_{\i} = \tran{\hat{V}_2}C_{\i}\hat{V}_2, 
\end{align*}
then 
\begin{align*}
    \tilde{G}_{\i} = \invtran{T_2}\hat{G}_{\i}\inv{T_2}, \qquad 
    \tilde{C}_{\i} = \invtran{T_2}\hat{C}_{\i}\inv{T_2}, \qquad 
    \tilde{B}_{\i} = R_{\i}. 
\end{align*}
The resulting computational procedure is summarized in \algref{alg:two point FlexRC}. 

\begin{algorithm}
    \caption{Two-point FlexRC reduction}
    \label{alg:two point FlexRC}
    \begin{algorithmic}[1]
        \Require $G, C \in \real^{N \times N}$, $B \in \real^{N \times p}$; frequency points $[0, s_2]$; port-reduction tolerance $\tol$. 
        \Ensure Reduced matrices $\tilde{G}$, $\tilde{C}$, and $\tilde{B}$. 
        \State Apply elimination as in \eqref{eq:SIP congruence transformation} to obtain $\tilde{G}_{\p}$, $\tilde{C}_{\p}$, and $\tilde{C}_{\c}$. 
        \State Compute an economic QR decomposition $\tilde{C}_{\c} = QR$. 
        \State Compute row norms $\rho_j = \twonorm{R(j, :)}$ for $j = 1,\ldots,p$. 
        \State Sort the row indices so that $\rho_{\pi_1} \leq \cdots \leq \rho_{\pi_p}$. 
        \State Choose the largest $m \in \set{0,1,\ldots,p}$ such that 
        \begin{align*}
            \sqrt{\rho_{\pi_1}^2 + \cdots + \rho_{\pi_m}^2}/\Frobenius{R} \leq \tol. 
        \end{align*}
        \State Set $\mathcal{J} = \set{\pi_1,\ldots,\pi_m}$, $\mathcal{I} = \set{1,\ldots,p}\setminus\mathcal{J}$, and $p_{\i} = \abs{\mathcal{I}}$. 
        \State Set $B_{\i} = Q(:,\mathcal{I})$ and $R_{\i} = R(\mathcal{I}, :)$. 
        \State Solve $(G_{\i} + s_2 C_{\i})X = B_{\i}$, and normalize each column of $X$ to obtain $\hat{V}_2$ and $D$ satisfying $\hat{V}_2D = X$. 
        \State Compute $T_2 = \tran{B_{\i}} \hat{V}_2$. 
        \State Compute $\hat{G}_{\i} = \tran{\hat{V}_2}G_{\i}\hat{V}_2$ and $\hat{C}_{\i} = \tran{\hat{V}_2}C_{\i}\hat{V}_2$. 
        \State Set $\tilde{G}_{\i} = \invtran{T_2}\hat{G}_{\i}\inv{T_2}$, $\tilde{C}_{\i} = \invtran{T_2}\hat{C}_{\i}\inv{T_2}$, and $\tilde{B}_{\i} = R_{\i}$. 
        \State Assemble the final reduced model: 
        \begin{align*}
            \tilde{G} = 
            \begin{bmatrix}
                \tilde{G}_{\p} & 0 \\ 
                0 & \tilde{G}_{\i}
            \end{bmatrix}, \qquad 
            \tilde{C} = 
            \begin{bmatrix}
                \tilde{C}_{\p} & \tran{R_{\i}} \\ 
                R_{\i} & \tilde{C}_{\i}
            \end{bmatrix}, \qquad 
            \tilde{B} = 
            \begin{bmatrix}
                I_p \\ 0
            \end{bmatrix}. 
        \end{align*}
    \end{algorithmic}
\end{algorithm}

\subsection{More frequency points}
When more than two frequency points are used, FlexRC projects the perturbed internal subsystem onto the rational Krylov subspace \cite{Ruhe1998practical} associated with the frequency sequence $[s_2,\ldots,s_q]$. Unlike the standard rational Arnoldi process in \cite{Ruhe1998practical}, FlexRC uses incomplete orthogonalization and constructs a nonorthogonal projection basis. This modification preserves the sparse coupling structure induced by $R_{\i}$. 

The first block $V_2$ has already been constructed in the two-point case. Starting from $V_2$, we construct the orthonormal blocks $V_3, \ldots, V_q$ one at a time, so that $[V_2, V_3, \ldots, V_q]$ spans the rational Krylov subspace associated with $[s_2,\ldots,s_q]$. For $k = 3,\ldots,q$, in the iteration associated with $s_k$, we first form the candidate block 
\begin{align*}
    W_k = \inv{(G_{\i} + s_k C_{\i})}C_{\i}V_{k-1}. 
\end{align*}
Since $V_2$ is not orthonormal in general, its contribution is removed using the orthogonal projector onto $\spanspace{V_2}$: 
\begin{align*}
    W_k \leftarrow W_k - V_2\inv{(\tran{V_2}V_2)}(\tran{V_2}W_k). 
\end{align*}
The block $W_k$ is then orthogonalized against the previously generated blocks $V_3, \ldots, V_{k-1}$ by Gram--Schmidt. Finally, as in the port-reduction step, an economic QR decomposition with deflation tolerance $\tol$ is applied to discard nearly dependent columns. The surviving columns form $V_k$, and the basis generated by the incomplete orthogonalization process is 
\begin{align*}
    V_{\i} = \left[V_2, V_3, \ldots, V_q \right]. 
\end{align*}
Thus the columns in $V_3, \ldots, V_q$ are orthonormal and orthogonal to $\spanspace{V_2}$, while $V_2$ itself is not orthonormal in general. This is why the process is described as incomplete orthogonalization. The procedure is summarized in \algref{alg:block rational Arnoldi process with incomplete orthogonalization}. 

\begin{algorithm}
    \caption{Block rational Arnoldi process with incomplete orthogonalization}
    \label{alg:block rational Arnoldi process with incomplete orthogonalization}
    \begin{algorithmic}[1]
        \Require $G_{\i}, C_{\i} \in \real^{(N-p) \times (N-p)}$; internal block $V_2$; frequency points $s_3,\ldots,s_q$; deflation tolerance $\tol$. 
        \Ensure Basis $V_{\i} = [V_2,V_3,\ldots,V_q]$. 
        \State Set $V_{\i} = V_2$. 
        \For {$k = 3,\ldots,q$}
            \State Compute $W = \inv{(G_{\i} + s_k C_{\i})}C_{\i}V_{k-1}$. 
            \State Remove the component in $V_2$: 
            \Statex \hspace{\algorithmicindent}$W \leftarrow W - V_2\inv{(\tran{V_2}V_2)}(\tran{V_2}W)$. 
            \For {$j = 3,\ldots,k-1$}
                \State Orthogonalize against the previous block: 
                \Statex \hspace{\algorithmicindent}$W \leftarrow W - V_j(\tran{V_j}W)$. 
            \EndFor
            \State Compute an economic QR decomposition $W = QR$, and let $n_k$ be the number of columns of $Q$. 
            \State Compute row norms $\rho_j = \twonorm{R(j, :)}$ for $j = 1,\ldots,n_k$. 
            \State Sort the row indices so that $\rho_{\pi_1} \leq \cdots \leq \rho_{\pi_{n_k}}$. 
            \State Choose the largest $m_k \in \set{0,1,\ldots,n_k}$ such that 
            \begin{align*}
                \sqrt{\rho_{\pi_1}^2 + \cdots + \rho_{\pi_{m_k}}^2}/\Frobenius{R} \leq \tol. 
            \end{align*}
            \State Set $\mathcal{J}_k = \set{\pi_1,\ldots,\pi_{m_k}}$ and $\mathcal{I}_k = \set{1,\ldots,n_k}\setminus\mathcal{J}_k$. 
            \State Set $V_k = Q(:,\mathcal{I}_k)$. 
            \If {$V_k$ is empty}
                \State Stop. 
            \EndIf
            \State Append $V_k$ to $V_{\i}$: $V_{\i} \leftarrow [V_{\i}, V_k]$. 
        \EndFor
    \end{algorithmic}
\end{algorithm}

We next apply a change of basis so that the reduced coupling matrix has the desired sparse block form. Write $V_{\i} = [\,V_2 \mid V_{\mathrm{tail}}\,]$ and compute 
\begin{align*}
    T_{\mathrm{tail}} = \tran{B_{\i}}V_{\mathrm{tail}}. 
\end{align*}
Since $\tran{B_{\i}} V_2 = I_{p_{\i}}$, replacing the tail block by 
\begin{align*}
    \tilde{V}_{\mathrm{tail}} = V_{\mathrm{tail}} - V_2 T_{\mathrm{tail}} 
\end{align*}
removes its component seen by $B_{\i}$:
\begin{align*}
    \tran{B_{\i}} \tilde{V}_{\mathrm{tail}} 
    = \tran{B_{\i}} V_{\mathrm{tail}} - \tran{B_{\i}} V_2 T_{\mathrm{tail}} 
    = T_{\mathrm{tail}} - T_{\mathrm{tail}} 
    = 0. 
\end{align*}
Therefore, with the new basis 
\begin{align*}
    \tilde{V}_{\i} = [\,V_2 \mid \tilde{V}_{\mathrm{tail}}\,], 
\end{align*}
it holds that 
\begin{align*}
    \tran{\tilde{V}_{\i}}B_{\i} = \begin{bmatrix} I_{p_{\i}} \\ 0 \end{bmatrix}. 
\end{align*}
After this change of basis, we project the perturbed internal subsystem \eqref{eq:perturbed internal subsystem} with $\tilde{V}_{\i}$. The internal reduced matrices are 
\begin{align*}
    \tilde{G}_{\i} = \tran{\tilde{V}_{\i}}G_{\i}\tilde{V}_{\i}, \qquad 
    \tilde{C}_{\i} = \tran{\tilde{V}_{\i}}C_{\i}\tilde{V}_{\i}. 
\end{align*}
The corresponding reduced coupling matrix is 
\begin{align*}
    \tilde{B}_{\i} = \tran{\tilde{V}_{\i}}B_{\i}R_{\i} = 
    \begin{bmatrix} 
        R_{\i} \\ 0 
    \end{bmatrix}. 
\end{align*}
Thus only the first $p_{\i}$ reduced internal coordinates are directly coupled to the port variables. 

In practical implementation, neither the scaled block $V_2 = \hat{V}_2\inv{T_2}$ nor the equivalent basis $\tilde{V}_{\i}$ is formed explicitly. The incomplete Arnoldi process is carried out with the normalized block $\hat{V}_2$, which spans the same subspace as $V_2$. After projecting with the computed basis $\hat{V}_{\i} = [\,\hat{V}_2 \mid V_{\mathrm{tail}}\,]$, the transformations induced by $T_2$ and $T_{\mathrm{tail}}$ are applied only to the reduced matrices. More precisely, let 
\begin{align*}
    T_{\i} =
    \begin{bmatrix}
        \inv{T_2} & -\inv{T_2}T_{\mathrm{tail}} \\ 
        0 & I
    \end{bmatrix}. 
\end{align*}
Denote $\hat{G}_{\i} = \tran{\hat{V}_{\i}}G_{\i}\hat{V}_{\i}$ and $\hat{C}_{\i} = \tran{\hat{V}_{\i}}C_{\i}\hat{V}_{\i}$. Then 
\begin{align*}
    \tilde{G}_{\i} = \tran{T_{\i}}\hat{G}_{\i}T_{\i}, \qquad 
    \tilde{C}_{\i} = \tran{T_{\i}}\hat{C}_{\i}T_{\i}. 
\end{align*}

With respect to $\tilde{V}_{\i} = [\,V_2 \mid \tilde{V}_{\mathrm{tail}}\,]$, we can write 
\begin{align*}
    \tilde{G}_{\i} =
    \begin{bmatrix}
        \tilde{G}_{2,2} & \tran{\tilde{G}_{\tail,2}} \\ 
        \tilde{G}_{\tail,2} & \tilde{G}_{\tail,\tail} 
    \end{bmatrix}, \qquad 
    \tilde{C}_{\i} =
    \begin{bmatrix}
        \tilde{C}_{2,2} & \tran{\tilde{C}_{\tail,2}} \\ 
        \tilde{C}_{\tail,2} & \tilde{C}_{\tail,\tail} 
    \end{bmatrix}, 
\end{align*}
where the blocks $\tilde{G}_{2,2}$ and $\tilde{C}_{2,2}$ have size $p_{\i} \times p_{\i}$. The following proposition shows that $\tilde{G}_{\tail,2}$ and $\tilde{C}_{\tail,2}$ are linearly dependent. 
\begin{proposition}
    \label{prop:block coupling relation}
    The off-diagonal blocks satisfy 
    \begin{align*}
        \tilde{G}_{\tail,2} = -s_2 \tilde{C}_{\tail,2}. 
    \end{align*}
\end{proposition}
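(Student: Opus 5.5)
The plan is to combine the two off-diagonal blocks into a single expression involving $G_{\i} + s_2 C_{\i}$ and then show that this expression vanishes. By definition,
\begin{align*}
    \tilde{G}_{\tail,2} + s_2\tilde{C}_{\tail,2}
    = \tran{\tilde{V}_{\tail}}\,(G_{\i} + s_2 C_{\i})\,V_2 .
\end{align*}
It therefore suffices to show that $(G_{\i} + s_2 C_{\i})V_2$ is annihilated by $\tran{\tilde{V}_{\tail}}$.

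\textbf{Step 1: identify $(G_{\i} + s_2 C_{\i})V_2$.} From the construction, $\hat{V}_2 D = \inv{(G_{\i} + s_2 C_{\i})}B_{\i}$ and $V_2 = \hat{V}_2\inv{T_2}$. Hence
\begin{align*}
    V_2 = \inv{(G_{\i} + s_2 C_{\i})}B_{\i}\inv{D}\inv{T_2},
\end{align*}
and so
\begin{align*}
    (G_{\i} + s_2 C_{\i})V_2 = B_{\i}\,\inv{D}\inv{T_2}.
\end{align*}
Thus the columns of $(G_{\i} + s_2 C_{\i})V_2$ lie in $\spanspace{B_{\i}}$, multiplied on the right by the small nonsingular matrix $\inv{D}\inv{T_2}$.

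\textbf{Step 2: use the change of basis.} Substituting Step 1 gives
\begin{align*}
    \tran{\tilde{V}_{\tail}}(G_{\i} + s_2 C_{\i})V_2
    = \bigl(\tran{\tilde{V}_{\tail}}B_{\i}\bigr)\inv{D}\inv{T_2}.
\end{align*}
The tail replacement $\tilde{V}_{\tail} = V_{\tail} - V_2 T_{\tail}$ was designed so that $\tran{B_{\i}}\tilde{V}_{\tail} = 0$. Transposing this identity gives $\tran{\tilde{V}_{\tail}}B_{\i} = 0$, so the whole product vanishes. Therefore $\tilde{G}_{\tail,2} + s_2\tilde{C}_{\tail,2} = 0$, which is the claim.

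The only step requiring care is Step 1. One must track that $V_2$ itself, and not merely its column span, equals $\inv{(G_{\i}+s_2C_{\i})}B_{\i}$ times a nonsingular $p_{\i}\times p_{\i}$ matrix. This relies on the invertibility of $T_2$ (established in the paper) and of $D$. One must also confirm that the blocks in the statement are taken with respect to $\tilde{V}_{\i} = [\,V_2 \mid \tilde{V}_{\tail}\,]$, as stated, rather than the computed basis $\hat{V}_{\i}$.

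The result also holds in the computed coordinates. There, the congruence by $T_{\i}$ maps $\hat{V}_{\i}$ to $\tilde{V}_{\i}$ exactly, so the same identity carries over. Beyond this bookkeeping, the argument is a two-line computation, and the symmetry of $G_{\i}$ and $C_{\i}$ is not needed.
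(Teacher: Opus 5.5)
Your proposal is correct and follows the paper's proof essentially verbatim. You show $(G_{\i}+s_2C_{\i})V_2 = B_{\i}\inv{D}\inv{T_2}$ and then use $\tran{\tilde{V}_{\tail}}B_{\i}=0$ to conclude $\tilde{G}_{\tail,2}+s_2\tilde{C}_{\tail,2}=0$; your extra remark that the identity transfers to the computed coordinates via $\hat{V}_{\i}T_{\i}=\tilde{V}_{\i}$ is a correct check that the paper leaves implicit.
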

The proof is given in \appref{app:proof block coupling relation}. Since $\tilde{G}_{\tail,2}$ is a scalar multiple of $\tilde{C}_{\tail,2}$, a QR factorization of $\tilde{C}_{\tail,2}$ can be used to sparsify both off-diagonal blocks. We compute a full QR decomposition 
\begin{align*}
    \tilde{C}_{\tail,2} = Q_{\tail} R_{\tail,2} 
\end{align*}
where $Q_{\tail}$ is a square orthogonal matrix, and $R_{\tail,2}$ has zeros below its main diagonal. We then apply the block diagonal congruence transformation induced by 
\begin{align*}
    P_{\tail} =
    \begin{bmatrix}
        I_{p_{\i}} & 0 \\ 
        0 & Q_{\tail}
    \end{bmatrix}
\end{align*}
to the internal reduced matrices. For notational simplicity, the transformed tail diagonal blocks are still denoted by $\tilde{G}_{\tail,\tail}$ and $\tilde{C}_{\tail,\tail}$. The transformed internal reduced matrices are 
\begin{align*}
    \tilde{G}_{\i} & =
    \begin{bmatrix}
        \tilde{G}_{2,2} & -s_2\tran{R_{\tail,2}} \\ 
        -s_2R_{\tail,2} & \tilde{G}_{\tail,\tail}
    \end{bmatrix}, \\ 
    \tilde{C}_{\i} & =
    \begin{bmatrix}
        \tilde{C}_{2,2} & \tran{R_{\tail,2}} \\ 
        R_{\tail,2} & \tilde{C}_{\tail,\tail}
    \end{bmatrix}. 
\end{align*}
The reduced coupling matrix remains unchanged under this transformation: 
\begin{align*}
    \tilde{B}_{\i} & = \tran{P_{\tail}}
    \begin{bmatrix}
        R_{\i} \\ 0
    \end{bmatrix}
    = \begin{bmatrix}
        R_{\i} \\ 0
    \end{bmatrix}. 
\end{align*}
For $q \geq 4$, similar sparsification operations can be further applied to the tail blocks $\tilde{G}_{\tail,\tail}$ and $\tilde{C}_{\tail,\tail}$, but this extension is not pursued because two or three points are usually sufficient in practical applications. 

Cascading the retained port-node subsystem with the projected internal subsystem gives the final reduced-order model: 
\begin{equation}
    \label{eq:reduced model of FlexRC}
    \begin{aligned}
    \tilde{G} & = \begin{bmatrix}
        \tilde{G}_{\p} & 0 & 0 \\ 
        0 & \tilde{G}_{2,2} & -s_2\tran{R_{\tail,2}} \\ 
        0 & -s_2R_{\tail,2} & \tilde{G}_{\tail,\tail}
    \end{bmatrix}, \\
    \tilde{C} & = \begin{bmatrix}
        \tilde{C}_{\p} & \tran{R_{\i}} & 0 \\ 
        R_{\i} & \tilde{C}_{2,2} & \tran{R_{\tail,2}} \\ 
        0 & R_{\tail,2} & \tilde{C}_{\tail,\tail}
    \end{bmatrix}, \\
    \tilde{B} & = \begin{bmatrix}
        I_p \\ 0 \\ 0
    \end{bmatrix}. 
    \end{aligned}
\end{equation}
The complete multi-point FlexRC procedure is summarized in \algref{alg:multi point FlexRC}. 
\begin{algorithm}
    \caption{Multi-point FlexRC reduction}
    \label{alg:multi point FlexRC}
    \begin{algorithmic}[1]
        \Require $G, C \in \real^{N \times N}$, $B \in \real^{N \times p}$; frequency points $[s_1=0,s_2,\ldots,s_q]$; port-reduction tolerance $\tol$. 
        \Ensure Reduced matrices $\tilde{G}$, $\tilde{C}$, and $\tilde{B}$. 
        \State Apply the elimination and port-reduction steps in \algref{alg:two point FlexRC} to obtain $G_{\i}$, $C_{\i}$, $B_{\i}$, $R_{\i}$, $\tilde{G}_{\p}$, $\tilde{C}_{\p}$, $\hat{V}_2$, and $T_2$. 
        \If {$q \leq 2$}
            \State Return the reduced model in \algref{alg:two point FlexRC}. 
        \EndIf
        \State Apply \algref{alg:block rational Arnoldi process with incomplete orthogonalization} with $\hat{V}_2$ to obtain $\hat{V}_{\i} = [\,\hat{V}_2 \mid V_{\mathrm{tail}}\,]$. 
        \State Project the internal subsystem with $\hat{V}_{\i}$ and apply the change of basis described above to obtain $\tilde{G}_{\i}$, $\tilde{C}_{\i}$, and $\tilde{B}_{\i}$. 
        \State Apply the block sparsification based on \propref{prop:block coupling relation} to obtain $R_{\tail,2}$ and the transformed internal blocks. 
        \State Assemble $\tilde{G}$, $\tilde{C}$, and $\tilde{B}$ as in \eqref{eq:reduced model of FlexRC}. 
    \end{algorithmic}
\end{algorithm}

\subsection{Discussion and sparsity control}
\label{sec:discussion and sparsity control}

The reduction process above generates a reduced model with a banded block structure, consistent with the structure obtained by TurboMOR-RC. Unlike a fixed reduction procedure, FlexRC leaves several choices to the user: the sequence of frequency points, the port-reduction tolerance, and the sparsity-control option. These adjustable components are the main source of the flexibility of FlexRC, allowing the reduced model to be tuned for accuracy, reduced order, and simulation efficiency. 

The first practical choice in FlexRC is the sequence of frequency points $\mathcal{S}$. For RC reduction, the frequency points are usually chosen as nonnegative real numbers, so that each matrix $G_{\i} + s_k C_{\i}$ is real and the whole reduction procedure only involves real arithmetic. The moment-matching effect of the frequency sequence is analyzed in \secref{sec:moment matching}. In our experience, two frequency points are usually sufficient to capture the time-domain behavior of practical RC networks. Using more than three points can substantially increase the reduced order, and the reduced model may even become less efficient than the original system. 

The second practical choice is the tolerance $\tol$. This parameter controls the trade-off between the accuracy of the reduced model and its simulation efficiency. In principle, the tolerance for port reduction and the tolerance for Arnoldi deflation can be set independently. In the numerical experiments of this paper, however, we use the same tolerance $\tol$ for both steps. The most conservative choice is $\tol = 0$, for which no direction is discarded in the port-reduction and deflation steps. In this case, if all frequency points are chosen as $0$, the resulting FlexRC model differs from the TurboMOR-RC model only by a nonsingular change of basis. By choosing an appropriate tolerance, FlexRC can significantly reduce the reduced order without noticeably affecting accuracy in many practical cases; an error analysis is given in \secref{sec:error analysis of port reduction}. However, an overly aggressive tolerance may degrade the accuracy. At present, an a priori strategy for selecting $\tol$ is not available, and this remains future work.

Finally, FlexRC also supports a sparsity-control variant. Eliminating too many local nodes in the first step may introduce severe fill-in and make the reduced model too dense. To control this effect, one can preserve more internal nodes in the first elimination step, following the spirit of SMP-RCR \cite{YXLYZAS2025SMPRCR}. Indeed, if $\mathcal{P}$ is the original port-node set, we choose an enlarged retained set $\mathcal{P}_{\mathrm{sc}}$ with $\mathcal{P} \subseteq \mathcal{P}_{\mathrm{sc}}$, where the additional nodes are auxiliary retained nodes rather than external ports. 

This sparsity-control variant keeps more variables before reduction, but it can substantially improve the sparsity of the final matrices by reducing fill-in in the initial elimination. In the port-reduction step, more directions can be discarded, and a sparse QR factorization often produces a sparser factor $R$. In our implementation, the enlarged retained set usually contains about two times the number of original ports, and a two-point model is often already sufficiently accurate with this strategy. 

\section{Analysis}
\label{sec:analysis}

\subsection{Passivity}

The passivity of the RC network is intrinsically linked to the symmetric positive semidefiniteness of the conductance and capacitance matrices. We use the following standard positive-real characterization from \cite{RS2011lyapunov}. 
\begin{theorem}
    Let $G, C \in \real^{N \times N}$ be symmetric positive semidefinite matrices, and let $B \in \real^{N \times p}$. If $\ker G \cap \ker C = \set{0}$, then the transfer function $H(s) = \tran{B}\inv{(G + s C)}B$ is positive real. Specifically, $H(s)$ satisfies the following conditions:
    \begin{enumerate}
        \item $H(s)$ is analytic in the open right half-plane $\complex^{+}$;
        \item $H(s) + \conjtran{H(s)} \succeq 0$ for all $s \in \complex^{+}$. 
    \end{enumerate}
\end{theorem}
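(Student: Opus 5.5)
The plan is to prove the two properties in turn. Analyticity comes from showing that the pencil $G + sC$ is nonsingular for every $s$ in $\complex^{+}$. Positive realness comes from writing $H(s) + \conjtran{H(s)}$ as a congruence of $(G+sC) + \conjtran{(G+sC)}$. Both steps rest on one quadratic-form identity: since $G$ and $C$ are real symmetric, for any $z \in \complex^N$,
\begin{align*}
    \conjtran{z}(G + sC)z = \conjtran{z}Gz + s\,\conjtran{z}Cz ,
\end{align*}
where $\conjtran{z}Gz \ge 0$ and $\conjtran{z}Cz \ge 0$ are real.

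\textbf{Step 1 (nonsingularity).} Fix $s$ with $\Re s > 0$ and suppose $(G+sC)z = 0$. Then $\conjtran{z}Gz + s\,\conjtran{z}Cz = 0$. Taking the real part gives $\conjtran{z}Gz + (\Re s)\,\conjtran{z}Cz = 0$. Both terms are nonnegative and $\Re s > 0$, so $\conjtran{z}Gz = 0$ and $\conjtran{z}Cz = 0$. For a symmetric positive semidefinite matrix, $\conjtran{z}Gz = 0$ forces $Gz = 0$; to see this, write $G = \tran{F}F$, so that $\twonorm{Fz}^2 = 0$. The same holds for $C$. Hence $z \in \ker G \cap \ker C = \set{0}$, and $G + sC$ is invertible on $\complex^{+}$. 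The entries of $\inv{(G+sC)}$ are rational in $s$ (adjugate over determinant), and the determinant has no zeros in $\complex^{+}$. Therefore $H(s) = \tran{B}\inv{(G+sC)}B$ is analytic there, which is property 1.

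\textbf{Step 2 (positive realness).} Let $A = G + sC$ and $X = \inv{A}B$. Since $B$ is real,
\begin{align*}
    H(s) + \conjtran{H(s)} = \conjtran{B}\inv{A}B + \conjtran{B}\invconjtran{A}B = \conjtran{X}\left(\conjtran{A} + A\right)X ,
\end{align*}
where the second equality substitutes $B = AX$ and $\conjtran{B} = \conjtran{X}\conjtran{A}$. Because $G$ and $C$ are real symmetric, $A + \conjtran{A} = 2G + 2(\Re s)C$, which is positive semidefinite for $\Re s > 0$. A congruence preserves semidefiniteness, so $H(s) + \conjtran{H(s)} \succeq 0$, which is property 2.

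The main obstacle is Step 1, specifically passing from a vanishing quadratic form to a vanishing matrix-vector product. This is exactly where both semidefiniteness and the hypothesis $\ker G \cap \ker C = \set{0}$ are needed. Once invertibility on $\complex^{+}$ is in hand, Step 2 is a direct computation.
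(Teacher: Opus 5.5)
The paper gives no proof of this theorem. It quotes it as a standard positive-real characterization from \cite{RS2011lyapunov}, so there is no in-paper argument to compare against. Your proof is correct and self-contained.

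In Step 1, the real part of $\conjtran{z}(G+sC)z$ shows that $G+sC$ is injective on $\complex^{+}$, and analyticity then follows from the adjugate formula. In Step 2, the identity $H(s)+\conjtran{H(s)} = \conjtran{X}\bigl(2G + 2(\Re s)C\bigr)X$ gives the semidefiniteness. This is the same device the paper uses in the Appendix proof of \thmref{thm:internal perturbation bound}: there, $\Re\bigl(\inv{A_{\omega}}\bigr) = \invconjtran{A_{\omega}}G_{\i}\inv{A_{\omega}}$ is exactly your Step~2 identity specialized to $s = \imaginary\omega$.

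One step is left implicit. You conclude that a complex $z$ with $Gz = Cz = 0$ must vanish, but the hypothesis $\ker G \cap \ker C = \set{0}$ concerns real vectors. Because $G$ and $C$ are real, the real and imaginary parts of $z$ each lie in that real intersection, so both are zero.
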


Apart from port reduction, the overall FlexRC procedure can be summarized by two steps. The first step eliminates the coupling block of the conductance matrix by Gaussian elimination, which is equivalent to the congruence transformation with the matrix $M$ in \eqref{eq:SIP congruence transformation}. The second step is projection of the internal subsystem with the nonorthogonal basis $\tilde{V}_{\i}$, which is equivalent to a congruence-type projection of the transformed system with $\diag{I_p,\tilde{V}_{\i}}$. The additional block sparsification in the multi-point case is only an equivalent congruence change of basis inside the reduced internal subsystem. Therefore, if port reduction is not applied, the reduced matrices $\tilde{G}$ and $\tilde{C}$ remain symmetric positive semidefinite. 

The port-reduction step changes the reduced model only through the replacement of the coupling matrix $\tilde{C}_{\c}$ by $B_{\i}R_{\i}$. From the viewpoint of the reduced model obtained above, this replacement introduces a symmetric perturbation to the capacitance matrix $\tilde{C}$. Such a perturbation may theoretically destroy positive semidefiniteness. In practice, however, the reduced capacitance matrix before port reduction is usually positive definite, and a sufficiently small perturbation preserves positive definiteness by Weyl's lemma \cite{Book:GV1996}. In our numerical experiments, with the tolerances used for port reduction, no passivity violation has been observed. 

\subsection{Moment matching}
\label{sec:moment matching}

We now discuss the moment-matching properties. The analysis in this subsection assumes the exact case $\tol = 0$, so that no perturbation is introduced. In this case, the basis $\tilde{V}_{\i}$ spans the rational Krylov subspace of the internal subsystem associated with the frequency points $[s_2,\ldots,s_q]$. The moment matching of the full transfer function then follows from the internal moment matching and the Schur-complement relation. 

Let $H(s) = \tran{B}\inv{(G + sC)}B$ and $\tilde{H}(s) = \tran{\tilde{B}}\inv{(\tilde{G} + s\tilde{C})}\tilde{B}$ denote the transfer functions of the original and reduced systems, respectively. For any $s_0 \geq 0$, let $q(s_0)$ denote the number of occurrences of $s_0$ in the sequence $\mathcal{S} = [s_1, s_2, \ldots, s_q] = [0, s_2, \ldots, s_q]$. The following theorem characterizes the moment-matching property of FlexRC. 
\begin{theorem}
    \label{thm:moment matching at non-pole point of FlexRC}
    Suppose $G + s_0 C$ and $\tilde{G} + s_0 \tilde{C}$ are nonsingular. Then, for FlexRC with $\tol = 0$, $\tilde{H}(s)$ matches the first $2q(s_0)$ moments of $H(s)$ at $s = s_0$. 
\end{theorem}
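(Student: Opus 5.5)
The plan is to recognise the whole FlexRC construction with $\tol = 0$ as a single congruence projection of the original system, and then invoke \thmref{thm:moment-matching for RC reduction} with the block Krylov subspace at $s_0$.

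\textbf{Step 1: write FlexRC as one projection.} With $\tol=0$ we have $B_{\i}R_{\i}=\tilde{C}_{\c}$, so the internal subsystem is not perturbed. The reduced model \eqref{eq:reduced model of FlexRC} equals $\tran{W}GW$, $\tran{W}CW$, $\tran{W}B$ with
\begin{align*}
    W \define M\begin{bmatrix} I_p & 0\\ 0 & \tilde{V}_{\i}P_{\tail}\end{bmatrix}.
\end{align*}
The elimination step, the projection with $\tilde{V}_{\i}$, and the sparsifying change of basis are all congruences, and $\tran{M}B=B$. Invertible changes of basis inside the internal block do not change the subspace. Hence $\spanspace{W}=\spanspace{M\,\diag{I_p,V_{\i}}}$, where $V_{\i}$ spans the rational Krylov subspace generated by $s_2,\dots,s_q$.

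\textbf{Step 2: show $\spanspace{W}$ contains $\krylov_{q(s_0)}(\inv{A(s_0)}C,\inv{A(s_0)}B)$.} Work in the transformed coordinates $G'=\tran{M}GM=\diag{\tilde{G}_{\p},G_{\i}}$, $C'=\tran{M}CM$, $B'=B$. Since $\spanspace{MX}=M\,\spanspace{X}$ and the Krylov subspace transforms covariantly under $M$, it suffices to show that $\spanspace{\diag{I_p,V_{\i}}}$ contains $\krylov_{q(s_0)}(\inv{(G'+s_0C')}C',\inv{(G'+s_0C')}B)$.

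Let $\Pi$ be the orthogonal projector onto $\set{(x_{\p},x_{\i})}$ with $x_{\i}\in\spanspace{V_{\i}}$. Because the port coordinates are kept entirely, only the internal components of the Krylov vectors need to be controlled.

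For a vector $v=(v_{\p},v_{\i})$ with $v_{\i}\in\spanspace{V_{\i}}$, solve $(G'+s_0C')z=C'v$ by block elimination. The internal part of $z$ satisfies
\begin{align*}
    (G_{\i}+s_0C_{\i})z_{\i} = s_0\tilde{C}_{\c}(\cdots) + C_{\i}v_{\i} - s_0\tilde{C}_{\c}z_{\p}.
\end{align*}
Every term on the right lies in $\spanspace{\tilde{C}_{\c}}+C_{\i}\spanspace{V_{\i}}$.

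Now split into cases.
\begin{itemize}
    \item If $s_0=0$: the rational Krylov space is not needed for the internal part at all, because $G'$ is block diagonal and $\inv{G_{\i}}\tilde{C}_{\c}$ comes from $s_1=0$ when $0\in\set{s_2,\ldots}$. This case is exactly the TurboMOR-RC/SIP mechanism. The $k$-th repetition of $0$ in $[s_2,\ldots,s_q]$ supplies $(\inv{G_{\i}}C_{\i})^{k-1}\inv{G_{\i}}\tilde{C}_{\c}$ (up to lower terms), so $q(0)$ blocks are obtained.
    \item If $s_0>0$: I would argue inductively that the $j$-th Krylov vector has internal component in $\sum_{k\le j}(\inv{(G_{\i}+s_0C_{\i})}C_{\i})^{k-1}\inv{(G_{\i}+s_0C_{\i})}\spanspace{\tilde{C}_{\c}}$. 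This is the standard rational Krylov space at $s_0$ with multiplicity $q(s_0)$, which is contained in $\spanspace{V_{\i}}$ by Ruhe's rational Krylov property. Shift-invert operators with different poles commute, so the order of the points in the sequence does not matter.
\end{itemize}

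\textbf{Step 3: conclude.} Apply \thmref{thm:moment-matching for RC reduction} with $q=q(s_0)$, using the nonsingularity hypotheses on $G+s_0C$ and $\tilde{G}+s_0\tilde{C}$. This gives $2q(s_0)$ matched moments.

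\textbf{Main obstacle.} The hard part is Step 2: the subspace is $\diag{I_p,\cdot}$-structured rather than a plain Krylov subspace of the full system. The port block couples into the internal block through $\tilde{C}_{\c}$ and through $s_0C_{\p}$-type terms.

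The first thing I would try is a Schur-complement formulation. Write $H(s)=\tran{B}\inv{(G'+sC')}B$ as the inverse of $\tilde{G}_{\p}+s\tilde{C}_{\p}-s^2\tran{\tilde{C}_{\c}}\inv{(G_{\i}+sC_{\i})}\tilde{C}_{\c}$, and do the same for the reduced model, with the internal transfer function projected by $\tilde{V}_{\i}$. Then reduce the claim to the statement that the projected internal transfer function $\tran{\tilde{C}_{\c}}\inv{(G_{\i}+sC_{\i})}\tilde{C}_{\c}$ matches $2q(s_0)$ moments at $s_0$. Two-sided projection with the rational Krylov basis gives exactly that by the standard rational interpolation result (\thmref{thm:moment-matching for RC reduction} applied to $\Sigma_{\i}$ at $s_0$).

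The matching then transfers through the smooth map $F\mapsto\inv{(\tilde{G}_{\p}+s\tilde{C}_{\p}-s^2F)}$, which preserves Taylor coefficients up to the same order. The one remaining care point is $s_0=0$: the DC term $\tilde{G}_{\p}$ may be singular, which is why nonsingularity of $G+s_0C$ is assumed. In that case the factor $s^2$ already supplies two extra orders, consistent with the count $q(0)\ge1$ coming from $s_1=0$.
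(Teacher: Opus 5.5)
Your closing Schur-complement argument is exactly the paper's proof. The paper writes $H(s)=\inv{(\tilde G_{\p}+s\tilde C_{\p}-s^2H_{\i}(s))}$ and uses $H-\tilde H=H\,[s^2(H_{\i}-\tilde H_{\i})]\,\tilde H$. It gets $H_{\i}-\tilde H_{\i}=O((s-s_0)^{2k})$ from \thmref{thm:moment-matching for RC reduction} applied to $\Sigma_{\i}$, where $k$ is the multiplicity of $s_0$ in $[s_2,\ldots,s_q]$. At $s_0=0$ the factor $s^2$ supplies the two missing orders. There $k=q(0)-1$, so the internal function matches $2q(0)-2$ moments, not $2q(s_0)$ as your sentence says.

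Your primary route is genuinely different and also works. With $\tol=0$ the model is the congruence with $W=M\diag{I_p,\tilde V_{\i}P_{\tail}}$. Since $\tran{M}B=B$, one has $\krylov_j(\inv{A'}C',\inv{A'}B)=\inv{M}\krylov_j(\inv{A(s_0)}C,\inv{A(s_0)}B)$ with $A'=\tran{M}A(s_0)M$. Your nested induction for $s_0>0$ is correct.

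Step 2 needs three repairs:
\begin{enumerate}
\item The internal equation is $(G_{\i}+s_0C_{\i})z_{\i}=\tilde C_{\c}v_{\p}+C_{\i}v_{\i}-s_0\tilde C_{\c}z_{\p}$; the first term carries no factor $s_0$. With your version, at $s_0=0$ every Krylov vector would have zero internal part, and SIP alone would match all moments, which is false.
\item Consequently, at $s_0=0$ the rational Krylov space \emph{is} needed. The point $s_1=0$ only supplies the first block $\inv{G'}B$, whose internal part is zero. For $j\ge2$ the internal part of block $j$ lies in $\krylov_{j-1}(\inv{G_{\i}}C_{\i},\inv{G_{\i}}\tilde C_{\c})$. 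This has to come from the $q(0)-1$ zeros among $s_2,\ldots,s_q$, not from $s_1$.
\item You should note that \thmref{thm:moment-matching for RC reduction} holds for any full-column-rank, non-orthonormal basis, since $\tilde H$ is invariant under $W\mapsto WT$ for nonsingular $T$.
\end{enumerate}

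As for what each route buys: your projection route presents FlexRC as a single PRIMA-type congruence, the same view the paper uses for passivity. The cost is an extra lifting step from internal to global Krylov containment, and it needs $G+s_0C$ nonsingular just to define the Krylov space. The paper's route only needs internal moment matching on $\Sigma_{\i}$, where $G_{\i}+s_0C_{\i}$ is automatically positive definite. It also carries over verbatim to \thmref{thm:moment matching at pole point of FlexRC}: replacing $H,\tilde H=O(1)$ by $O(s^{-1})$ gives the Laurent result for singular $G$, which your primary route cannot reach.
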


In practical RC networks, the conductance matrix $G$ is often singular because some connected components of the resistor network may not contain a grounding resistor. In this case, $s_0 = 0$ may become a pole of the transfer function. Such cases have been considered in Lyapunov balancing to obtain a bounded $\mathcal{H}_{\infty}$-norm estimate \cite{RS2011lyapunov}, but they have not been addressed in the moment-matching analysis of elimination-based methods. For RC networks with singular $G$, if $\ker G \cap \ker C = \set{0}$, then the transfer functions $H(s)$ and $\tilde{H}(s)$ admit the following Laurent expansions \cite{RS2011lyapunov}: 
\begin{align*}
    H(s) = \frac{M_{-1}}{s} + \sum_{k=0}^{\infty} M_k s^k, \qquad \tilde{H}(s) = \frac{\tilde{M}_{-1}}{s} + \sum_{k=0}^{\infty} \tilde{M}_k s^k. 
\end{align*}
Since $s_1 = 0$ in the sequence $\mathcal{S}$, the number of occurrences $q(0) \geq 1$. The following theorem gives the Laurent moment matching of FlexRC. 
\begin{theorem}
    \label{thm:moment matching at pole point of FlexRC}
    Suppose the conductance matrix $G$ is singular and $\ker G \cap \ker C = \set{0}$. Then, for exact FlexRC with $\tol = 0$, $\tilde{H}(s)$ satisfies 
    \begin{align*}
        \tilde{M}_k = M_k, \quad k = -1, 0, \ldots, 2q(0)-3. 
    \end{align*}
\end{theorem}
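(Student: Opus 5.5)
The plan is to reduce everything to a statement about Schur complements. Then I compare the original and reduced transfer functions through the identity $H - \tilde{H} = H(\tilde{S} - S)\tilde{H}$, where the two Schur complements agree to high order at $s=0$ even though both $H$ and $\tilde{H}$ have a pole there.

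\emph{Step 1: Schur-complement form.} Throughout, I use that the elimination step requires $G_{\i}$ to be nonsingular, hence symmetric positive definite; only the port block $\tilde{G}_{\p}$ carries the singularity of $G$. By the congruence \eqref{eq:SIP congruence transformation} and block elimination of $x_{\i}$,
\begin{align*}
H(s) = \inv{S(s)}, \qquad S(s) = \tilde{G}_{\p} + s\tilde{C}_{\p} - s^2 H_{\i}(s), \qquad H_{\i}(s) = \tran{\tilde{C}_{\c}}\inv{(G_{\i}+sC_{\i})}\tilde{C}_{\c}.
\end{align*}
This holds for all small $s \neq 0$ at which $G+sC$ is invertible, which is all but finitely many $s$.

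For the reduced model \eqref{eq:reduced model of FlexRC} with $\tol=0$, we have $B_{\i}R_{\i} = \tilde{C}_{\c}$. The reduced internal blocks are the congruence projection of $\Sigma_{\i}$ with the full-column-rank basis $\tilde{V}_{\i}$, up to the orthogonal change $P_{\tail}$. Hence $\tilde{H} = \inv{\tilde{S}}$ with
\begin{align*}
\tilde{S}(s) = \tilde{G}_{\p} + s\tilde{C}_{\p} - s^2 \tilde{H}_{\i}(s),
\end{align*}
where $\tilde{H}_{\i}$ is the transfer function of the projected internal subsystem. Moreover $\tilde{G}_{\i} = \tran{\tilde{V}_{\i}}G_{\i}\tilde{V}_{\i}$ is positive definite, so $\tilde{H}_{\i}$ is analytic at $0$.

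\emph{Step 2: internal matching at $0$.} The point $0$ occurs $q(0)-1$ times in $[s_2,\ldots,s_q]$. With $\tol=0$ no column is deflated, so $\spanspace{\tilde{V}_{\i}}$ contains $\krylov_{q(0)-1}(\inv{G_{\i}}C_{\i}, \inv{G_{\i}}\tilde{C}_{\c})$. This is the standard nesting property of rational Krylov spaces with repeated shifts, and I will argue it by following the recurrence in \algref{alg:block rational Arnoldi process with incomplete orthogonalization}. Theorem \ref{thm:moment-matching for RC reduction} applied to $\Sigma_{\i}$ at $s_0=0$ then gives $\tilde{H}_{\i}(s) - H_{\i}(s) = O(s^{2q(0)-2})$; when $q(0)=1$ this is only the trivial bound $O(1)$. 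Consequently
\begin{align*}
\tilde{S}(s) - S(s) = -s^2\bigl(\tilde{H}_{\i}(s) - H_{\i}(s)\bigr) = O(s^{2q(0)}).
\end{align*}

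\emph{Step 3: handling the pole.} We have $H - \tilde{H} = H(\tilde{S}-S)\tilde{H}$. The Laurent expansion of $H$ quoted before the theorem gives $H = O(s^{-1})$. If the same holds for $\tilde{H}$, then $H - \tilde{H} = O(s^{-1}\cdot s^{2q(0)}\cdot s^{-1}) = O(s^{2q(0)-2})$. Therefore all Laurent coefficients of index $-1,0,\ldots,2q(0)-3$ coincide.

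\emph{Main obstacle.} The hard part is justifying that $\tilde{H}$ has at most a simple pole, i.e.\ that $\ker\tilde{G}\cap\ker\tilde{C}=\set{0}$, so that the cited Laurent result applies to the reduced model. I would argue by contradiction. Since $\tilde{G}_{\i}$ is positive definite, any $z \in \ker\tilde{G}\cap\ker\tilde{C}$ has the form $z=(x_{\p},0)$ with
\begin{align*}
\tilde{G}_{\p}x_{\p}=0, \qquad \tilde{C}_{\p}x_{\p}=0, \qquad R_{\i}x_{\p}=0.
\end{align*}
With $\tol=0$, the last condition gives $\tilde{C}_{\c}x_{\p} = B_{\i}R_{\i}x_{\p} = 0$. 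Then, for $w = M(x_{\p},0)$, the forms \eqref{eq:SIP congruence transformation} give $\tran{M}Gw=0$ and $\tran{M}Cw=0$. Since $M$ is invertible, $w\in\ker G\cap\ker C=\set{0}$, so $x_{\p}=0$. Thus positive semidefiniteness of the reduced matrices (from the congruence argument in the passivity subsection) together with this kernel condition yields the Laurent expansion of $\tilde{H}$ with a simple pole, which completes the plan.
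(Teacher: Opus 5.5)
Your proposal is correct and follows essentially the same route as the paper: the Schur-complement identity $H-\tilde{H} = H\,[s^2(H_{\i}-\tilde{H}_{\i})]\,\tilde{H}$, the internal matching $H_{\i}-\tilde{H}_{\i}=O(s^{2q(0)-2})$ from \thmref{thm:moment-matching for RC reduction}, and the bounds $H,\tilde{H}=O(s^{-1})$. Your kernel argument showing $\ker\tilde{G}\cap\ker\tilde{C}=\set{0}$ is a useful addition, because the paper only asserts that $\tilde{H}$ has a Laurent expansion with at most a simple pole.
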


\begin{proof}
    We prove \thmref{thm:moment matching at non-pole point of FlexRC} and \thmref{thm:moment matching at pole point of FlexRC} together. Let $H_{\i}(s)$ and $\tilde{H}_{\i}(s)$ be the transfer functions of the internal subsystem before and after reduction, respectively: 
    \begin{align*}
        H_{\i}(s)
        & = \tran{\tilde{C}_{\c}}\inv{(G_{\i} + sC_{\i})}\tilde{C}_{\c}, \\
        \tilde{H}_{\i}(s)
        & = \tran{\tilde{B}_{\i}}\inv{(\tilde{G}_{\i} + s\tilde{C}_{\i})}\tilde{B}_{\i}. 
    \end{align*}
    The transfer function of the full system can be written by the Schur complement as 
    \begin{align*}
        H(s)
        & = \inv{\left(\tilde{G}_{\p} + s\tilde{C}_{\p} - s^2H_{\i}(s)\right)}. 
    \end{align*}
    Similarly, the reduced model satisfies 
    \begin{align*}
        \tilde{H}(s)
        & = \inv{\left(\tilde{G}_{\p} + s\tilde{C}_{\p} - s^2\tilde{H}_{\i}(s)\right)}. 
    \end{align*}
    Therefore, 
    \begin{align*}
        H(s) - \tilde{H}(s)
        & = H(s)\left[s^2\bigl(H_{\i}(s) - \tilde{H}_{\i}(s)\bigr)\right]\tilde{H}(s). 
    \end{align*}
    Since $\tol = 0$, the basis $\tilde{V}_{\i}$ spans the rational Krylov subspace of the internal subsystem. 
    By \thmref{thm:moment-matching for RC reduction}, if $s_0$ appears $k$ times in the internal frequency sequence $[s_2, \ldots, s_q]$, then the internal transfer functions satisfy 
    \begin{align*}
        H_{\i}(s) - \tilde{H}_{\i}(s) = O((s - s_0)^{2k}). 
    \end{align*}

    We first consider the non-pole case in \thmref{thm:moment matching at non-pole point of FlexRC}. If $s_0 = 0$, then the frequency point $s_0$ is used once by the elimination step and appears $q(0)-1$ times in $[s_2, \ldots, s_q]$. Hence $H_{\i}(s) - \tilde{H}_{\i}(s) = O(s^{2q(0)-2})$, and 
    \begin{align*}
        H(s) - \tilde{H}(s)
        & = O(s^2) \cdot O(1) \cdot O(s^{2q(0)-2}) \cdot O(1) \\
        & = O(s^{2q(0)}). 
    \end{align*}
    If $s_0 \neq 0$, then $s_0$ appears $q(s_0)$ times in $[s_2, \ldots, s_q]$. Since $s^2 = s_0^2 + O(s - s_0)$ with $s_0^2 \neq 0$, we obtain 
    \begin{align*}
        H(s) - \tilde{H}(s)
        & = O(1) \cdot O(1) \cdot O\bigl((s - s_0)^{2q(s_0)}\bigr) \cdot O(1) \\
        & = O\bigl((s - s_0)^{2q(s_0)}\bigr). 
    \end{align*}
    Thus $\tilde{H}(s)$ matches the first $2q(s_0)$ moments of $H(s)$ at $s = s_0$. 

    We now consider the pole case in \thmref{thm:moment matching at pole point of FlexRC}. The same internal moment matching gives $H_{\i}(s) - \tilde{H}_{\i}(s) = O(s^{2q(0)-2})$. From the Laurent expansions, $H(s) = O(s^{-1})$ and $\tilde{H}(s) = O(s^{-1})$. Therefore 
    \begin{align*}
        H(s) - \tilde{H}(s)
        & = O(s^2) \cdot O(s^{-1}) \cdot O(s^{2q(0)-2}) \cdot O(s^{-1}) \\
        & = O(s^{2q(0)-2}). 
    \end{align*}
    Comparing Laurent coefficients gives $\tilde{M}_k = M_k$ for $k = -1, 0, \ldots, 2q(0) - 3$. 
\end{proof}

The result in \thmref{thm:moment matching at pole point of FlexRC} is not restricted to FlexRC. The same argument also applies to existing elimination-based RC reduction methods. Unlike the Lyapunov balancing methods in \cite{RS2011lyapunov}, which explicitly compute $M_{-1}$ to enforce Laurent moment matching, elimination-based methods obtain Laurent moment matching automatically through the elimination process. 

\subsection{Error analysis of port reduction}
\label{sec:error analysis of port reduction}

This subsection discusses how port reduction affects the accuracy of reduced models. The port-reduction step replaces the internal coupling matrix $\tilde{C}_{\c}$ by a low-rank approximation $B_{\i}R_{\i}$, which gives the perturbed internal subsystem 
\begin{align*}
    \perturbed{\Sigma}_{\i}: 
    \left\{\begin{aligned} 
        & (G_{\i} + s C_{\i})x_{\i} = (\tilde{C}_{\c} + \Delta)u_{\i}, \\ 
        & y_{\i} = \tran{(\tilde{C}_{\c} + \Delta)}x_{\i}. 
    \end{aligned}\right. 
\end{align*}
where the perturbation 
\begin{align*}
    \Delta \define B_{\i}R_{\i} - \tilde{C}_{\c}, \qquad 
    \Frobenius{\Delta} / \Frobenius{\tilde{C}_{\c}} \leq \tol. 
\end{align*}
To quantify the error between the original and perturbed internal subsystems, let $H_{\i}(\imaginary \omega)$ and $\perturbed{H}_{\i}(\imaginary \omega)$ be the transfer functions of $\Sigma_{\i}$ and $\perturbed{\Sigma}_{\i}$ at $\imaginary \omega$ ($\omega \in \real$), respectively: 
\begin{align*}
    H_{\i}(\imaginary \omega)
    & = \tran{\tilde{C}_{\c}}\inv{(G_{\i} + \imaginary \omega C_{\i})}\tilde{C}_{\c}, \\
    \perturbed{H}_{\i}(\imaginary \omega)
    & = \tran{(\tilde{C}_{\c} + \Delta)}
    \inv{(G_{\i} + \imaginary \omega C_{\i})}(\tilde{C}_{\c} + \Delta),
\end{align*}
and define the relative error as 
\begin{align*}
    \delta_{\i}(\imaginary \omega) \define \dfrac{\Frobenius{H_{\i}(\imaginary \omega) - \perturbed{H}_{\i}(\imaginary \omega)}}{\Frobenius{H_{\i}(\imaginary \omega)}}. 
\end{align*}
The following theorem gives a conservative error estimate for this relative error. 
\begin{theorem}
    \label{thm:internal perturbation bound}
    If $\Frobenius{\Delta} \leq \epsilon \Frobenius{\tilde{C}_{\c}}$, then 
    \begin{align*}
        \delta_{\i}(\imaginary \omega)
        \leq \sqrt{p}
        \dfrac{\twonorm{G_{\i} + \imaginary \omega C_{\i}}^2}{\lambda_{\min}(G_{\i})^2}
        \epsilon (2 + \epsilon), \quad \omega \in \real. 
    \end{align*}
\end{theorem}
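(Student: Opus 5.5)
Write $A \define G_{\i} + \imaginary\omega C_{\i}$ and $K \define \tilde{C}_{\c}$. The plan is to expand the difference, bound the numerator by submultiplicativity, and bound the denominator from below using the Hermitian part of $A$.

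First, expanding the perturbed transfer function gives
\begin{align*}
    \perturbed{H}_{\i}(\imaginary\omega) - H_{\i}(\imaginary\omega)
    = \tran{\Delta}\inv{A}K + \tran{K}\inv{A}\Delta + \tran{\Delta}\inv{A}\Delta .
\end{align*}
Using $\Frobenius{XYZ} \leq \Frobenius{X}\twonorm{Y}\Frobenius{Z}$ and $\Frobenius{\Delta} \leq \epsilon\Frobenius{K}$, the numerator is at most $\twonorm{\inv{A}}\Frobenius{K}^2\,\epsilon(2+\epsilon)$. The upper bound on $\twonorm{\inv{A}}$ comes from the Hermitian part of $A$, which is $G_{\i}$ because $C_{\i}$ is real symmetric. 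For any unit vector $x$,
\begin{align*}
    \twonorm{Ax} \geq \abs{\conjtran{x}Ax} \geq \Re(\conjtran{x}Ax) = \conjtran{x}G_{\i}x \geq \lambda_{\min}(G_{\i}).
\end{align*}
Hence $\twonorm{\inv{A}} \leq 1/\lambda_{\min}(G_{\i})$, with $\lambda_{\min}(G_{\i}) > 0$ under the standing assumption that $G_{\i}$ is positive definite.

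The main step is the lower bound on $\Frobenius{H_{\i}(\imaginary\omega)} = \Frobenius{\tran{K}\inv{A}K}$ in terms of $\Frobenius{K}^2$. I would use the same real-part argument on $\inv{A}$. Its Hermitian part is $\invconjtran{A}G_{\i}\inv{A}$, so for any $y$,
\begin{align*}
    \Re(\conjtran{y}\inv{A}y) = \conjtran{z}G_{\i}z \quad\text{with } z = \inv{A}y,
\end{align*}
and therefore
\begin{align*}
    \Re(\conjtran{y}\inv{A}y) \geq \lambda_{\min}(G_{\i})\twonorm{z}^2 \geq \lambda_{\min}(G_{\i})\twonorm{y}^2/\twonorm{A}^2 .
\end{align*}
Consequently the Hermitian part of $\tran{K}\inv{A}K$ dominates $\bigl(\lambda_{\min}(G_{\i})/\twonorm{A}^2\bigr)\tran{K}K$.

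Next, $\Frobenius{M} \geq \Frobenius{(M+\conjtran{M})/2}$ holds because the Frobenius norm of the Hermitian part cannot exceed $\Frobenius{M}$. For positive semidefinite matrices with $P \succeq Q \succeq 0$ we also have $\Frobenius{P} \geq \Frobenius{Q}$. Together these give
\begin{align*}
    \Frobenius{H_{\i}} \geq \frac{\lambda_{\min}(G_{\i})}{\twonorm{A}^2}\Frobenius{\tran{K}K}.
\end{align*}
Finally $\Frobenius{\tran{K}K} \geq \Frobenius{K}^2/\sqrt{p}$, since $\tran{K}K$ is $p\times p$ positive semidefinite with trace $\Frobenius{K}^2$ and $\mathrm{tr}(X) \leq \sqrt{p}\Frobenius{X}$ by Cauchy--Schwarz on the eigenvalues. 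This is where the factor $\sqrt{p}$ enters.

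Dividing the numerator bound by this lower bound yields
\begin{align*}
    \delta_{\i}(\imaginary\omega) \leq \sqrt{p}\,\frac{\twonorm{A}^2}{\lambda_{\min}(G_{\i})^2}\,\epsilon(2+\epsilon),
\end{align*}
which is the claim. If $K=0$ the ratio is undefined and the case is excluded. The obstacle I expect is the lower bound on $\Frobenius{H_{\i}}$, specifically the step that passes from the Löwner ordering of the Hermitian part to the Frobenius norm of the non-Hermitian matrix $H_{\i}$. The inequality $\Frobenius{M} \geq \Frobenius{\mathrm{He}(M)}$ handles this, and the rest is routine.
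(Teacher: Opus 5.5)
Your proof is correct, and its skeleton matches the paper's. Both use the same three-term expansion and the same numerator bound via $\twonorm{\inv{A}}\leq 1/\lambda_{\min}(G_{\i})$, obtained from the real part of $\conjtran{x}Ax$. Both also rest on the same identity: the Hermitian part of $\inv{A}$ is $\invconjtran{A}G_{\i}\inv{A}$.

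The difference is in the lower bound for the denominator. The paper discards everything except the diagonal of $H_{\i}$. It bounds each $\abs{\tran{c_j}\inv{A}c_j}$ below by $\lambda_{\min}(G_{\i})\twonorm{c_j}^2/\twonorm{A}^2$, then applies Cauchy--Schwarz to the $p$ numbers $\twonorm{c_j}^2$, so only scalar inequalities are needed. You work instead with the full Hermitian part in the L\"owner order. This uses that $K$ is real, so $\tran{K}=\conjtran{K}$, which you leave implicit. It also requires two extra facts, both of which you justify adequately: $\Frobenius{M}\geq\Frobenius{(M+\conjtran{M})/2}$, and Frobenius monotonicity on the positive semidefinite cone. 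You then apply Cauchy--Schwarz to the eigenvalues of $\tran{K}K$.

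Your route buys a sharper intermediate estimate, $\Frobenius{H_{\i}}\geq(\lambda_{\min}(G_{\i})/\twonorm{A}^2)\Frobenius{\tran{K}K}$. This dominates the paper's $\sqrt{\sum_j\twonorm{c_j}^4}$, which is only the Frobenius norm of the diagonal of $\tran{K}K$. Moreover, since $\mathrm{tr}(X)\leq\sqrt{\operatorname{rank}X}\,\Frobenius{X}$ for positive semidefinite $X$, your argument lets $\sqrt{p}$ be replaced by $\sqrt{\operatorname{rank}\tilde{C}_{\c}}$. For the theorem as stated, the two routes give the same constant, and the paper's is the more elementary one.
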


The proof is given in \appref{app:proof internal perturbation bound}. 

Since the port-reduction criterion gives $\Frobenius{\Delta} \leq \tol \Frobenius{\tilde{C}_{\c}}$, the theorem implies that, for sufficiently small $\tol$, the internal transfer-function perturbation satisfies $\delta_{\i}(\imaginary \omega) = O(\tol)$. This is only an error estimate for the internal subsystem; the perturbation of the internal transfer function is further propagated to the full transfer function. The estimate here is conservative and should mainly be interpreted as a qualitative perturbation estimate controlled by $\tol$. Therefore, this inequality does not provide an a priori criterion for selecting $\tol$. Moreover, this bound estimates the internal transfer-function perturbation, while RC reduced models are usually used for transient simulation. In practice, the error in the subsequent transient simulation is usually much smaller than this inequality suggests. 

\section{Numerical experiments}
\label{sec:numerical experiments}

In this section, we evaluate FlexRC on practical RC networks in terms of reduction time, transient simulation time, frequency-point selection, and sparsity control. All experiments are run on one CPU core of a cluster node with an Intel Xeon Gold 6226R @2.90 GHz processor and $700$ GB memory, and all algorithms are implemented in MATLAB R2020b. 

\tableref{tab:RC original systems} summarizes the test cases. The first six examples are provided by our industrial partners, and the last two are IBM power-grid benchmarks from \cite{Nassif2008power}. The condition numbers are estimated by MATLAB \texttt{condest}. Except for \texttt{DLL\_net90}, the full conductance matrices $G$ are singular or nearly singular in floating-point arithmetic. However, elimination-based methods only require the internal conductance block $G_{\i}$ in the first elimination step, and \tableref{tab:RC original systems} shows that $G_{\i}$ is much better conditioned. 

FlexRC is compared with PRIMA \cite{OCP1998PRIMA}, SPRIM \cite{Freund2004SPRIM}, and TurboMOR-RC \cite{OT2016TurboMOR}. PRIMA and SPRIM are implemented in MATLAB. For TurboMOR-RC, we follow the implementation strategy of \cite{OT2016TurboMOR}: the algorithm is driven from MATLAB, while the key Householder operations in the QR decomposition and the products with Householder matrices are implemented in C and called through MEX files. 

\begin{table*}[!t]
    \centering
    \caption{Original RC network information.}
    \label{tab:RC original systems}
    \begin{tabular}{c|r|r|c|c|c|c|c}
        \hline
        Case & Nodes & Ports & Resistors & Capacitors & $\mathrm{nnz}(G+C)$ & $\kappa(G)$ & $\kappa(G_{\i})$ \\
        \hline
        \texttt{AAADC\_net64} & 11070 & 419 & 30065 & 4958 & 71200 & $5.98\times 10^{17}$ & $4.70\times 10^{7}$ \\
        \texttt{AAADC\_net76} & 10989 & 420 & 29897 & 4999 & 70783 & $6.10\times 10^{17}$ & $4.68\times 10^{7}$ \\
        \texttt{DAC\_net99} & 57959 & 467 & 84658 & 18238 & 227275 & $7.58\times 10^{19}$ & $1.01\times 10^{7}$ \\
        \texttt{DLL\_net90} & 75452 & 697 & 103381 & 31321 & 282212 & $5.00\times 10^{7}$ & $6.99\times 10^{6}$ \\
        \texttt{ADC\_10bit\_net220} & 94816 & 554 & 165883 & 3626 & 426582 & $1.36\times 10^{18}$ & $4.95\times 10^{7}$ \\
        \texttt{PLLM\_ANA\_net4} & 46746 & 500 & 77488 & 9509 & 201722 & $4.53\times 10^{19}$ & $3.11\times 10^{8}$ \\
        \texttt{ibmpg1t} & 25272 & 250 & 40801 & 10774 & 95934 & $2.14\times 10^{18}$ & $3.37\times 10^{5}$ \\
        \texttt{ibmpg2t} & 163787 & 1200 & 245163 & 36838 & 617035 & $1.15\times 10^{19}$ & $3.71\times 10^{5}$ \\
        \hline
    \end{tabular}
\end{table*}

For transient simulations, the industrial examples provide only the extracted $G$ and $C$ matrices, so we use a synthetic sinusoidal input with frequency $10^8$ Hz. Although post-layout nonlinear simulation is the target application, the nonlinear device models and complete netlists are unavailable; linear transient simulations of the extracted RC networks are sufficient for evaluating the efficiency of the reduced models. The IBM power-grid examples are connected to $1.8$ V power supplies and driven by switching current sources. For all test cases, the simulation interval is $[0, 10^{-8}]$ s, the time step is $10^{-11}$ s, and the sparse linear systems are solved by CHOLMOD \cite{CDHR2008CHOLMOD} from SuiteSparse \cite{Davis2006direct}. The reported relative error is defined as 
\begin{align*}
    \varepsilon_{\mathrm{rel}} \define \max_i \frac{\twonorm{y_i - \tilde{y}_i}}{\twonorm{y_i}}, 
\end{align*}
where $y_i$ and $\tilde{y}_i$ are the time-domain waveforms of the $i$-th output of the full and reduced systems. 

\subsection{Reduction time and reduced model order}
This subsection compares the model reduction time and reduced order. Here $q$ denotes the number of frequency points, and the frequency sequence of FlexRC is set to $\mathcal{S} = [s_1,\ldots,s_q] = [0,\ldots,0]$. Since PRIMA and SPRIM require a nonsingular matrix $G+s_0C$, their expansion point is set to $s_0 = 1$. For FlexRC, port reduction and Arnoldi deflation use the same tolerance, with default values $\tol = 10^{-3}$ for $q = 2$ and $\tol = 5\times 10^{-5}$ for $q = 3$, and with slight case-dependent adjustments. 

\tableref{tab:RC reduction time} reports the model reduction time and the resulting reduced order. The reduction-time speedup is measured relative to PRIMA. For $q = 1$, PRIMA, TurboMOR-RC, and FlexRC have the same reduced order because only the first moment block is retained. For $q = 2$ and $q = 3$, FlexRC usually produces smaller reduced models on the examples. For instance, when $q = 2$, the reduced orders of FlexRC are $777$, $1095$, $872$, and $729$ for \texttt{DAC\_net99}, \texttt{DLL\_net90}, \texttt{ADC\_10bit\_net220}, and \texttt{PLLM\_ANA\_net4}, respectively, compared with $934$, $1394$, $1108$, and $1000$ for PRIMA and TurboMOR-RC. This reduction is due to the low-rank approximation of $\tilde{C}_{\c}$ in the port-reduction step, which reduces the reduced order after the first elimination step. 

Except for SPRIM, the reduction times of the other three methods are of the same order. FlexRC is faster than PRIMA and TurboMOR-RC in the two-point case on all examples in this table, and remains comparable overall in the three-point case. 

\begin{table*}[!t]
    \centering
    %\scriptsize
    \setlength{\tabcolsep}{2.5pt}
    \caption{Reduction time and reduced model order for the different methods. All times are in seconds.}
    \label{tab:RC reduction time}
    \begin{tabular}{@{}c|c|cc|ccc|ccc|ccc@{}}
        \hline
        \multirow{2}{*}{Case} & \multirow{2}{*}{$q$} & \multicolumn{2}{c|}{PRIMA} & \multicolumn{3}{c|}{SPRIM} & \multicolumn{3}{c|}{TurboMOR-RC} & \multicolumn{3}{c}{FlexRC} \\
        & & Time & Order & Time & Order & Speedup & Time & Order & Speedup & Time & Order & Speedup \\
        \hline
        \multirow{3}{*}{\texttt{DAC\_net99}} & 1 & 2.386 & 467 & 2.870 & 934 & 0.83$\times$ & 0.855 & 467 & 2.79$\times$ & 1.004 & 467 & 2.38$\times$ \\
         & 2 & 3.843 & 934 & 6.054 & 1401 & 0.63$\times$ & 3.179 & 934 & 1.21$\times$ & \textbf{2.534} & \textbf{777} & \textbf{1.52$\times$} \\
         & 3 & 6.609 & 1401 & 10.24 & 1868 & 0.65$\times$ & 6.595 & 1401 & 1.00$\times$ & 6.842 & 1351 & 0.97$\times$ \\
        \hline
        \multirow{3}{*}{\texttt{DLL\_net90}} & 1 & 4.695 & 697 & 7.525 & 1394 & 0.62$\times$ & 2.051 & 697 & 2.29$\times$ & 2.507 & 697 & 1.87$\times$ \\
         & 2 & 10.65 & 1394 & 16.50 & 2091 & 0.65$\times$ & 8.158 & 1394 & 1.31$\times$ & \textbf{6.780} & \textbf{1095} & \textbf{1.57$\times$} \\
         & 3 & 17.93 & 2091 & 25.63 & 2788 & 0.70$\times$ & 17.06 & 2091 & 1.05$\times$ & 18.86 & 1877 & 0.95$\times$ \\
        \hline
        \multirow{3}{*}{\texttt{ADC\_10bit\_net220}} & 1 & 4.586 & 554 & 7.611 & 1108 & 0.60$\times$ & 2.397 & 554 & 1.91$\times$ & 3.319 & 554 & 1.38$\times$ \\
         & 2 & 10.93 & 1108 & 16.56 & 1662 & 0.66$\times$ & 9.051 & 1108 & 1.21$\times$ & \textbf{6.919} & \textbf{872} & \textbf{1.58$\times$} \\
         & 3 & 18.10 & 1662 & 27.15 & 2216 & 0.67$\times$ & 18.69 & 1662 & 0.97$\times$ & 19.79 & 1390 & 0.91$\times$ \\
        \hline
        \multirow{3}{*}{\texttt{PLLM\_ANA\_net4}} & 1 & 1.508 & 500 & 2.647 & 1000 & 0.57$\times$ & 0.893 & 500 & 1.69$\times$ & 0.953 & 500 & 1.58$\times$ \\
         & 2 & 4.158 & 1000 & 6.330 & 1500 & 0.66$\times$ & 3.680 & 1000 & 1.13$\times$ & \textbf{2.074} & \textbf{729} & \textbf{2.01$\times$} \\
         & 3 & 7.092 & 1500 & 10.18 & 2000 & 0.70$\times$ & 7.514 & 1500 & 0.94$\times$ & 7.178 & 1320 & 0.99$\times$ \\
        \hline
        \multirow{3}{*}{\texttt{ibmpg1t}} & 1 & 0.434 & 250 & 0.637 & 500 & 0.68$\times$ & 0.260 & 250 & 1.67$\times$ & 0.294 & 250 & 1.48$\times$ \\
         & 2 & 0.942 & 500 & 1.282 & 750 & 0.73$\times$ & 0.815 & 500 & 1.16$\times$ & 0.739 & 500 & 1.27$\times$ \\
         & 3 & 1.483 & 750 & 2.131 & 1000 & 0.70$\times$ & 1.597 & 750 & 0.93$\times$ & 1.505 & 750 & 0.99$\times$ \\
        \hline
        \multirow{3}{*}{\texttt{ibmpg2t}} & 1 & 30.86 & 1200 & 42.44 & 2400 & 0.73$\times$ & 22.62 & 1200 & 1.36$\times$ & 25.39 & 1200 & 1.22$\times$ \\
         & 2 & 70.36 & 2400 & 92.76 & 3600 & 0.76$\times$ & 77.25 & 2400 & 0.91$\times$ & 58.95 & 2400 & 1.19$\times$ \\
         & 3 & 116.2 & 3600 & 154.0 & 4800 & 0.75$\times$ & 138.8 & 3600 & 0.84$\times$ & 124.0 & 2870 & 0.94$\times$ \\
        \hline
    \end{tabular}
\end{table*}

\subsection{Simulation time}

This subsection shows that the reduced models generated by FlexRC are efficient in subsequent transient simulation. \tableref{tab:RC simulation time} reports the transient simulation time, speedup, and relative error of the reduced models. The speedup here is computed with respect to the full-order transient simulation. 

The simulation results show the main benefit of FlexRC. In the two-point case, FlexRC gives the fastest transient simulations on the four industrial examples in this table, with speedups of $1.28\times$, $1.21\times$, $3.44\times$, and $1.96\times$ on \texttt{DAC\_net99}, \texttt{DLL\_net90}, \texttt{ADC\_10bit\_net220}, and \texttt{PLLM\_ANA\_net4}, respectively. These speedups are higher than those of PRIMA, SPRIM, and TurboMOR-RC, while the relative errors remain acceptable. In the three-point case, the larger reduced models are slower than the full-order simulation for three of the four examples. For the two IBM power-grid examples, using three points is still beneficial because the errors of the one-point and two-point models are relatively large. 

The accuracy of PRIMA and SPRIM may be affected by the ill-conditioning of the shifted linear systems on some examples. For $q = 2$, FlexRC gives errors close to those of TurboMOR-RC, while producing reduced models that are faster to simulate on the examples. For $q = 3$, FlexRC can lead to slightly larger errors than TurboMOR-RC in some cases, but these errors are nevertheless sufficient for the transient simulations considered here. Meanwhile, the additional accuracy of the three-point models is not always useful for improving transient simulation efficiency, because the larger reduced models may become more expensive to solve than the original systems. 

These results illustrate the benefit of port reduction. Compared with TurboMOR-RC, FlexRC applies a low-rank approximation to $\tilde{C}_{\c}$ to reduce the reduced order, which improves the efficiency of the subsequent transient simulation. The tolerance therefore provides a practical way to balance accuracy and simulation cost. 

\begin{table*}[!t]
    \centering
    %\scriptsize
    \setlength{\tabcolsep}{2pt}
    \caption{Simulation time and relative error for the reduced models obtained with the different methods. All times are in seconds.}
    \label{tab:RC simulation time}
    \begin{tabular}{@{}c|c|c|ccc|ccc|ccc|ccc@{}}
        \hline
        \multirow{2}{*}{Case} & \multirow{2}{*}{Full time} & \multirow{2}{*}{$q$} & \multicolumn{3}{c|}{PRIMA} & \multicolumn{3}{c|}{SPRIM} & \multicolumn{3}{c|}{TurboMOR-RC} & \multicolumn{3}{c}{FlexRC} \\
        & & & Time & Speedup & Error & Time & Speedup & Error & Time & Speedup & Error & Time & Speedup & Error \\
        \hline
        \multirow{3}{*}{\texttt{DAC\_net99}} & \multirow{3}{*}{31.49} & 1 & 8.999 & 3.50$\times$ & 1.29e-02 & 128.0 & 0.25$\times$ & 1.21e-02 & 9.931 & 3.17$\times$ & 1.29e-02 & 9.660 & 3.26$\times$ & 1.29e-02 \\
         &  & 2 & 48.77 & 0.65$\times$ & 1.21e-06 & 306.8 & 0.10$\times$ & 1.05e-06 & 36.99 & 0.85$\times$ & 1.35e-06 & \textbf{24.57} & \textbf{1.28$\times$} & 4.79e-06 \\
         &  & 3 & 116.5 & 0.27$\times$ & 1.83e-07 & 323.4 & 0.10$\times$ & 5.96e-08 & 70.66 & 0.45$\times$ & 8.08e-10 & 66.82 & 0.47$\times$ & 3.68e-08 \\
        \hline
        \multirow{3}{*}{\texttt{DLL\_net90}} & \multirow{3}{*}{47.42} & 1 & 22.46 & 2.11$\times$ & 8.88e-02 & 376.0 & 0.13$\times$ & 7.74e-02 & 14.57 & 3.26$\times$ & 8.88e-02 & 14.82 & 3.20$\times$ & 8.88e-02 \\
         &  & 2 & 99.21 & 0.48$\times$ & 1.13e-04 & 900.5 & 0.05$\times$ & 1.10e-04 & 70.88 & 0.67$\times$ & 1.14e-04 & \textbf{39.14} & \textbf{1.21$\times$} & 1.47e-04 \\
         &  & 3 & 375.4 & 0.13$\times$ & 6.80e-07 & 1067.3 & 0.04$\times$ & 6.46e-07 & 166.0 & 0.29$\times$ & 9.84e-07 & 135.9 & 0.35$\times$ & 2.47e-06 \\
        \hline
        \multirow{3}{*}{\texttt{ADC\_10bit\_net220}} & \multirow{3}{*}{103.3} & 1 & 14.87 & 6.95$\times$ & 2.54e-02 & 201.9 & 0.51$\times$ & 2.45e-02 & 13.39 & 7.72$\times$ & 2.54e-02 & 13.24 & 7.81$\times$ & 2.54e-02 \\
         &  & 2 & 69.14 & 1.49$\times$ & 2.80e-05 & 471.5 & 0.22$\times$ & 1.61e-05 & 50.71 & 2.04$\times$ & 2.76e-05 & \textbf{30.08} & \textbf{3.44$\times$} & 2.41e-05 \\
         &  & 3 & 185.8 & 0.56$\times$ & 1.18e-05 & 605.4 & 0.17$\times$ & 2.60e-06 & 91.59 & 1.13$\times$ & 7.48e-08 & 67.00 & 1.54$\times$ & 7.60e-08 \\
        \hline
        \multirow{3}{*}{\texttt{PLLM\_ANA\_net4}} & \multirow{3}{*}{39.50} & 1 & 10.99 & 3.60$\times$ & 4.75e-03 & 148.5 & 0.27$\times$ & 1.39e-03 & 10.05 & 3.93$\times$ & 2.44e-03 & 10.65 & 3.71$\times$ & 2.44e-03 \\
         &  & 2 & 55.61 & 0.71$\times$ & 7.72e-06 & 343.8 & 0.11$\times$ & 4.77e-06 & 42.09 & 0.94$\times$ & 3.04e-07 & \textbf{20.17} & \textbf{1.96$\times$} & 5.05e-07 \\
         &  & 3 & 133.8 & 0.30$\times$ & 3.79e-06 & 389.4 & 0.10$\times$ & 3.11e-06 & 73.72 & 0.54$\times$ & 7.66e-10 & 66.70 & 0.59$\times$ & 1.50e-09 \\
        \hline
        \multirow{3}{*}{\texttt{ibmpg1t}} & \multirow{3}{*}{18.86} & 1 & 0.989 & 19.07$\times$ & 7.83e-01 & 16.62 & 1.14$\times$ & 6.76e-01 & 0.962 & 19.61$\times$ & 7.83e-01 & 1.032 & 18.27$\times$ & 7.83e-01 \\
         &  & 2 & 3.326 & 5.67$\times$ & 3.11e-01 & 58.39 & 0.32$\times$ & 2.97e-01 & 5.079 & 3.71$\times$ & 3.11e-01 & 3.806 & 4.96$\times$ & 3.11e-01 \\
         &  & 3 & 6.989 & 2.70$\times$ & 1.10e-02 & 79.58 & 0.24$\times$ & 1.10e-02 & 9.587 & 1.97$\times$ & 1.10e-02 & \textbf{8.128} & \textbf{2.32$\times$} & 1.10e-02 \\
        \hline
        \multirow{3}{*}{\texttt{ibmpg2t}} & \multirow{3}{*}{367.6} & 1 & 31.89 & 11.53$\times$ & 6.87e-01 & 803.2 & 0.46$\times$ & 5.76e-01 & 28.96 & 12.69$\times$ & 6.87e-01 & 29.88 & 12.30$\times$ & 6.87e-01 \\
         &  & 2 & 150.9 & 2.44$\times$ & 6.55e-02 & 1504.8 & 0.24$\times$ & 6.20e-02 & 208.9 & 1.76$\times$ & 6.55e-02 & 159.2 & 2.31$\times$ & 6.55e-02 \\
         &  & 3 & 372.9 & 0.99$\times$ & 1.66e-03 & 3051.7 & 0.12$\times$ & 1.49e-04 & 394.6 & 0.93$\times$ & 3.10e-04 & \textbf{225.8} & \textbf{1.63$\times$} & 3.10e-04 \\
        \hline
    \end{tabular}
\end{table*}

To give a more direct view of the transient behavior, \figref{fig:RC ibmpg2t response error} compares the output waveform of the original \texttt{ibmpg2t} model against those of TurboMOR-RC and FlexRC. The right panel plots the error $y_1(t)-\tilde{y}_1(t)$. The two reduced models capture the dominant waveform accurately, and the error plot shows that the remaining discrepancy is small compared with the signal magnitude over the full transient interval. 

\begin{figure*}[!t]
    \centering
    \begin{minipage}{0.45\textwidth}
        \centering
        \includegraphics[width=\linewidth]{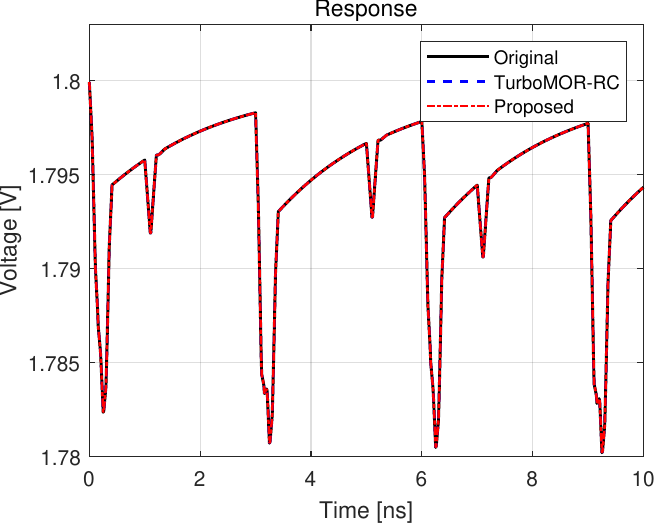}
    \end{minipage}
    \hfill
    \begin{minipage}{0.45\textwidth}
        \centering
        \includegraphics[width=\linewidth]{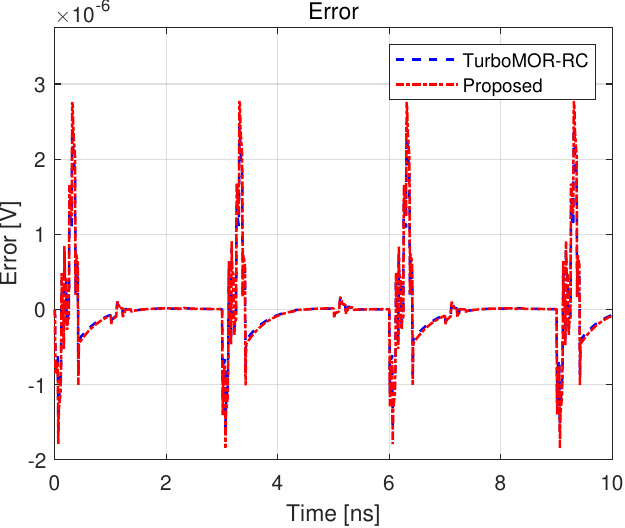}
    \end{minipage}
    \caption{Transient response and signed error for the first output of \texttt{ibmpg2t} with $p = 1200$. The left panel compares the original system, TurboMOR-RC, and FlexRC, while the right panel shows $y_1(t)-\tilde{y}_1(t)$ for the reduced models. The reduced models are constructed with the frequency points $[0,0,0]$.}
    \label{fig:RC ibmpg2t response error}
\end{figure*}

\subsection{Frequency points}

This subsection examines the effect of different frequency points. The goal is to verify that choosing suitable frequency points can improve transient accuracy without noticeably changing the reduced order or simulation time. In the previous experiments, the frequency sequence of FlexRC is set to $\mathcal{S} = [0,\ldots,0]$. Unlike TurboMOR-RC, whose frequency points are fixed at $s = 0$, FlexRC can use other frequency points in the reduction framework to improve accuracy. The results for \texttt{ibmpg1t} and \texttt{ibmpg2t} are reported in \tableref{tab:RC frequency point comparison ibmpg1t} and \tableref{tab:RC frequency point comparison ibmpg2t}, respectively. 

The results show that choosing suitable frequency points can effectively improve accuracy, while the reduction time and simulation time remain almost unchanged. For \texttt{ibmpg1t}, replacing $[0,0,0]$ by $[0,10^{10},10^{10}]$ reduces the three-point relative error from $1.10\times 10^{-2}$ to $6.38\times 10^{-4}$. For \texttt{ibmpg2t}, the corresponding error is reduced from $3.10\times 10^{-4}$ to $3.53\times 10^{-6}$. These results illustrate the main advantage of multi-point moment matching: the first frequency point $s_1 = 0$ preserves the DC behavior, while additional nonzero points can be selected to improve transient accuracy without noticeably increasing the simulation time. 

\begin{table*}[!t]
    \centering
    \caption{Comparison of different frequency points for \texttt{ibmpg1t}. All times are in seconds.}
    \label{tab:RC frequency point comparison ibmpg1t}
    \begin{tabular}{c|c|c|c|c|c}
        \hline
        Points & Frequency points & Model order & Reduction time & Simulation time & Relative error \\
        \hline
        2 & $[0,0]$ & 500 & 0.739 & 3.806 & 3.11e-01 \\
        2 & $[0,10^9]$ & 500 & 0.753 & 3.788 & 1.38e-01 \\
        2 & $[0,10^{10}]$ & 500 & 0.753 & 3.812 & 1.31e-02 \\
        \hline
        3 & $[0,0,0]$ & 750 & 1.505 & 8.128 & 1.10e-02 \\
        3 & $[0,10^9,10^9]$ & 750 & 1.544 & 7.966 & 1.37e-03 \\
        3 & $[0,10^{10},10^{10}]$ & 750 & 1.554 & 7.965 & 6.38e-04 \\
        \hline
    \end{tabular}
\end{table*}

\begin{table*}[!t]
    \centering
    \caption{Comparison of different frequency points for \texttt{ibmpg2t}. All times are in seconds.}
    \label{tab:RC frequency point comparison ibmpg2t}
    \begin{tabular}{c|c|c|c|c|c}
        \hline
        Points & Frequency points & Model order & Reduction time & Simulation time & Relative error \\
        \hline
        2 & $[0,0]$ & 2400 & 58.95 & 159.2 & 6.55e-02 \\
        2 & $[0,10^9]$ & 2400 & 61.10 & 160.8 & 2.70e-02 \\
        2 & $[0,10^{10}]$ & 2400 & 61.78 & 162.8 & 3.78e-03 \\
        \hline
        3 & $[0,0,0]$ & 2870 & 124.0 & 225.8 & 3.10e-04 \\
        3 & $[0,10^9,10^9]$ & 2872 & 126.4 & 223.9 & 5.34e-05 \\
        3 & $[0,10^{10},10^{10}]$ & 2874 & 122.9 & 227.4 & 3.53e-06 \\
        \hline
    \end{tabular}
\end{table*}

\subsection{Sparsity control}

The last experiment focuses on the sparsity-control variant of FlexRC discussed in \secref{sec:discussion and sparsity control}. The purpose is to show that sparsity control can generate sparser reduced models for transient simulation. Compared with the previously considered six examples, \texttt{AAADC\_net64} and \texttt{AAADC\_net76} have a larger fraction of port nodes among all nodes. For these cases, the reduced models generated by the previous methods can become inefficient in transient simulation. FlexRC can address this issue by sparsity control, and the resulting variant is denoted by ``FlexRC-SC''. 

\tableref{tab:RC sparsity control reduction time} reports the reduction time and reduced order, and \tableref{tab:RC sparsity control simulation time} reports the corresponding transient simulation time. FlexRC-SC generates a larger reduced model and requires a slightly longer reduction time, but it can significantly reduce the transient simulation time. For $q = 2$, the simulation time decreases from $12.75$ s to $3.088$ s on \texttt{AAADC\_net64}, and from $13.29$ s to $3.358$ s on \texttt{AAADC\_net76}. The relative error is also reduced from $4.20\times 10^{-2}$ to $1.80\times 10^{-5}$ on \texttt{AAADC\_net64}, and from $1.13\times 10^{-1}$ to $1.62\times 10^{-7}$ on \texttt{AAADC\_net76}. With sparsity control, the two-point model is already sufficiently accurate for these two examples. 

\begin{table*}[!t]
    \centering
    %\scriptsize
    \setlength{\tabcolsep}{3pt}
    \caption{Reduction time and reduced model order for the sparsity-control experiment. All times are in seconds.}
    \label{tab:RC sparsity control reduction time}
    \begin{tabular}{@{}c|c|cc|ccc|ccc|ccc|ccc@{}}
        \hline
        \multirow{2}{*}{Case} & \multirow{2}{*}{$q$} & \multicolumn{2}{c|}{PRIMA} & \multicolumn{3}{c|}{SPRIM} & \multicolumn{3}{c|}{TurboMOR-RC} & \multicolumn{3}{c|}{FlexRC} & \multicolumn{3}{c}{FlexRC-SC} \\
        & & Time & Order & Time & Order & Speedup & Time & Order & Speedup & Time & Order & Speedup & Time & Order & Speedup \\
        \hline
        \multirow{3}{*}{\texttt{AAADC\_net64}} & 1 & 0.318 & 419 & 0.490 & 838 & 0.65$\times$ & 0.162 & 419 & 1.97$\times$ & 0.179 & 419 & 1.77$\times$ & 0.270 & 838 & 1.18$\times$ \\
         & 2 & 0.637 & 838 & 0.995 & 1257 & 0.64$\times$ & 0.515 & 838 & 1.24$\times$ & 0.523 & 814 & 1.22$\times$ & 1.074 & 1256 & 0.59$\times$ \\
         & 3 & 1.071 & 1257 & 1.482 & 1676 & 0.72$\times$ & 1.024 & 1257 & 1.05$\times$ & 1.000 & 867 & 1.07$\times$ & 1.538 & 1266 & 0.70$\times$ \\
        \hline
        \multirow{3}{*}{\texttt{AAADC\_net76}} & 1 & 0.289 & 420 & 0.479 & 840 & 0.60$\times$ & 0.151 & 420 & 1.91$\times$ & 0.164 & 420 & 1.76$\times$ & 0.272 & 840 & 1.06$\times$ \\
         & 2 & 0.649 & 840 & 0.967 & 1260 & 0.67$\times$ & 0.513 & 840 & 1.27$\times$ & 0.494 & 811 & 1.31$\times$ & 1.121 & 1285 & 0.58$\times$ \\
         & 3 & 1.078 & 1260 & 1.485 & 1680 & 0.73$\times$ & 1.008 & 1260 & 1.07$\times$ & 0.944 & 853 & 1.14$\times$ & 1.677 & 1540 & 0.64$\times$ \\
        \hline
    \end{tabular}
\end{table*}

\begin{table*}[!t]
    \centering
    %\scriptsize
    %\small
    \setlength{\tabcolsep}{2.5pt}
    \caption{Simulation time and relative error for the sparsity-control experiment. All times are in seconds.}
    \label{tab:RC sparsity control simulation time}
    \begin{tabular}{@{}c|c|c|ccc|ccc|ccc|ccc|ccc@{}}
        \hline
        \multirow{2}{*}{Case} & \multirow{2}{*}{Full time} & \multirow{2}{*}{$q$} & \multicolumn{3}{c|}{PRIMA} & \multicolumn{3}{c|}{SPRIM} & \multicolumn{3}{c|}{TurboMOR-RC} & \multicolumn{3}{c|}{FlexRC} & \multicolumn{3}{c}{FlexRC-SC} \\
        & & & Time & Spd. & Error & Time & Spd. & Error & Time & Spd. & Error & Time & Spd. & Error & Time & Spd. & Error \\
        \hline
        \multirow{3}{*}{\texttt{AAADC\_net64}} & \multirow{3}{*}{5.705} & 1 & 5.851 & 0.97$\times$ & 4.78e+00 & 84.68 & 0.07$\times$ & 4.78e+00 & 2.977 & 1.92$\times$ & 4.78e+00 & 2.819 & 2.02$\times$ & 4.78e+00 & 1.058 & 5.39$\times$ & 1.02e-01 \\
         & & 2 & 29.95 & 0.19$\times$ & 2.62e-01 & 221.7 & 0.03$\times$ & 2.13e-01 & 15.55 & 0.37$\times$ & 6.14e-02 & 12.75 & 0.45$\times$ & 4.20e-02 & \textbf{3.088} & \textbf{1.85$\times$} & \textbf{1.80e-05} \\
         & & 3 & 65.57 & 0.09$\times$ & 8.90e-04 & 360.1 & 0.02$\times$ & 3.19e-04 & 35.11 & 0.16$\times$ & 1.14e-05 & 14.71 & 0.39$\times$ & 2.02e-05 & 3.126 & 1.83$\times$ & 5.72e-08 \\
        \hline
        \multirow{3}{*}{\texttt{AAADC\_net76}} & \multirow{3}{*}{5.683} & 1 & 6.196 & 0.92$\times$ & 4.80e+00 & 91.06 & 0.06$\times$ & 4.80e+00 & 3.128 & 1.82$\times$ & 4.80e+00 & 3.149 & 1.80$\times$ & 4.80e+00 & 1.133 & 5.02$\times$ & 5.31e-03 \\
         & & 2 & 27.86 & 0.20$\times$ & 3.50e-01 & 241.3 & 0.02$\times$ & 4.18e-01 & 18.36 & 0.31$\times$ & 3.30e-02 & 13.29 & 0.43$\times$ & 1.13e-01 & \textbf{3.358} & \textbf{1.69$\times$} & \textbf{1.62e-07} \\
         & & 3 & 67.23 & 0.08$\times$ & 3.16e-04 & 374.5 & 0.02$\times$ & 1.44e-04 & 36.60 & 0.16$\times$ & 2.41e-06 & 14.73 & 0.39$\times$ & 3.04e-04 & 15.70 & 0.36$\times$ & 1.26e-08 \\
        \hline
    \end{tabular}
\end{table*}

\figref{fig:RC AAADC net64 spy} and \figref{fig:RC AAADC net76 spy} compare the sparsity patterns of $\tilde{G}+\tilde{C}$ for the default FlexRC models and FlexRC-SC in the two-point case. The default reduced matrices are dense, while the matrices with sparsity control are much sparser. The number of nonzeros decreases from $300682$ to $47692$ for \texttt{AAADC\_net64}, and from $297312$ to $48603$ for \texttt{AAADC\_net76}, which explains the faster transient simulation observed in \tableref{tab:RC sparsity control simulation time}. 

\begin{figure*}[!t]
    \centering
    \begin{minipage}{0.36\textwidth}
        \centering
        \includegraphics[width=\linewidth]{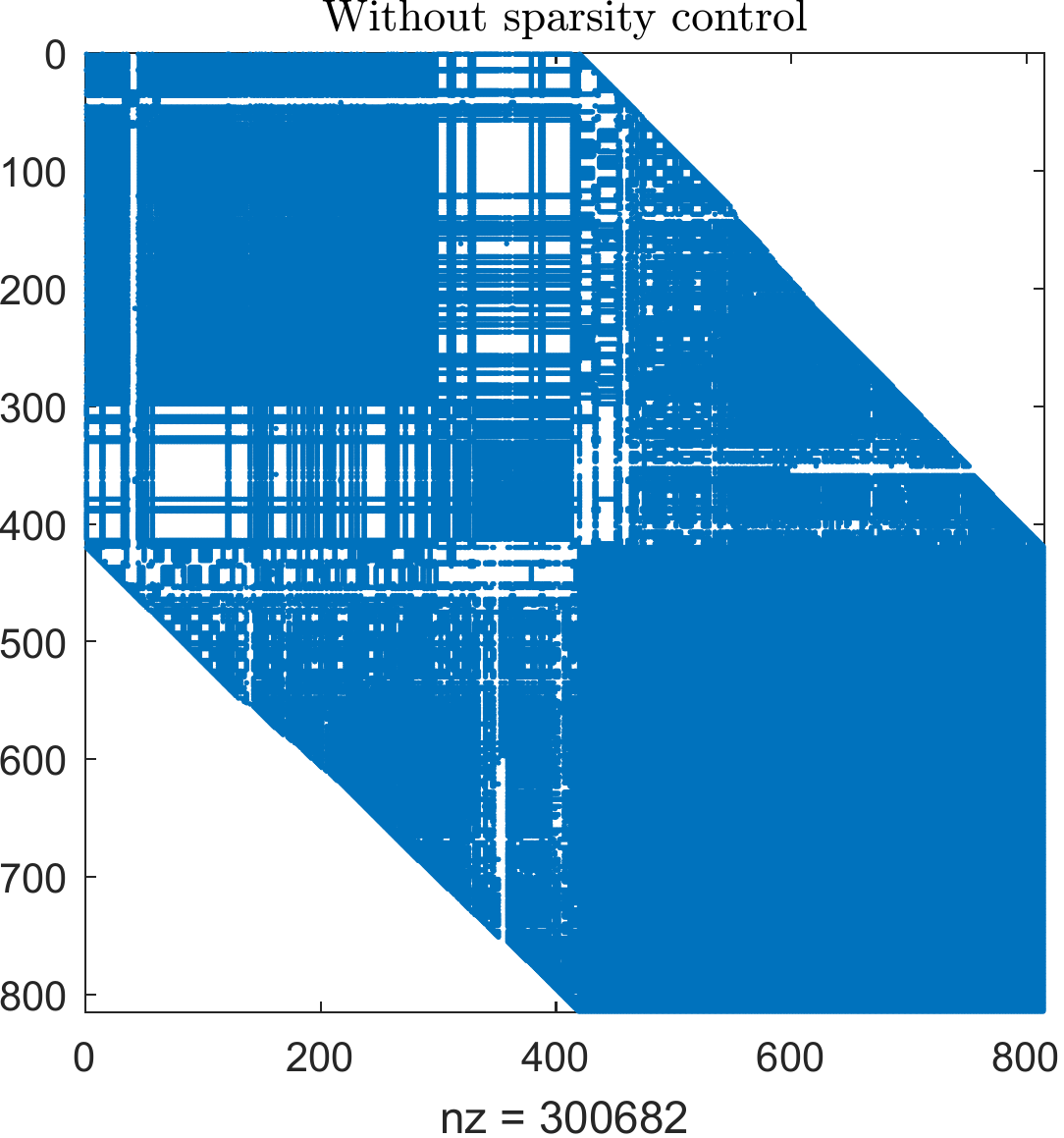}
    \end{minipage}
    \hspace{0.03\textwidth}
    \begin{minipage}{0.36\textwidth}
        \centering
        \includegraphics[width=\linewidth]{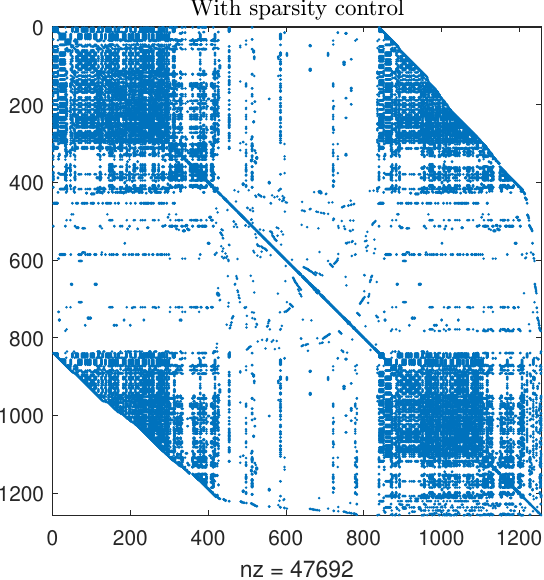}
    \end{minipage}
    \caption{Sparsity patterns of $\tilde{G}+\tilde{C}$ for the two-point reduced models of \texttt{AAADC\_net64}. The left panel shows the default FlexRC model with $300682$ nonzeros, and the right panel shows FlexRC-SC with $47692$ nonzeros.}
    \label{fig:RC AAADC net64 spy}
\end{figure*}

\begin{figure*}[!t]
    \centering
    \begin{minipage}{0.36\textwidth}
        \centering
        \includegraphics[width=\linewidth]{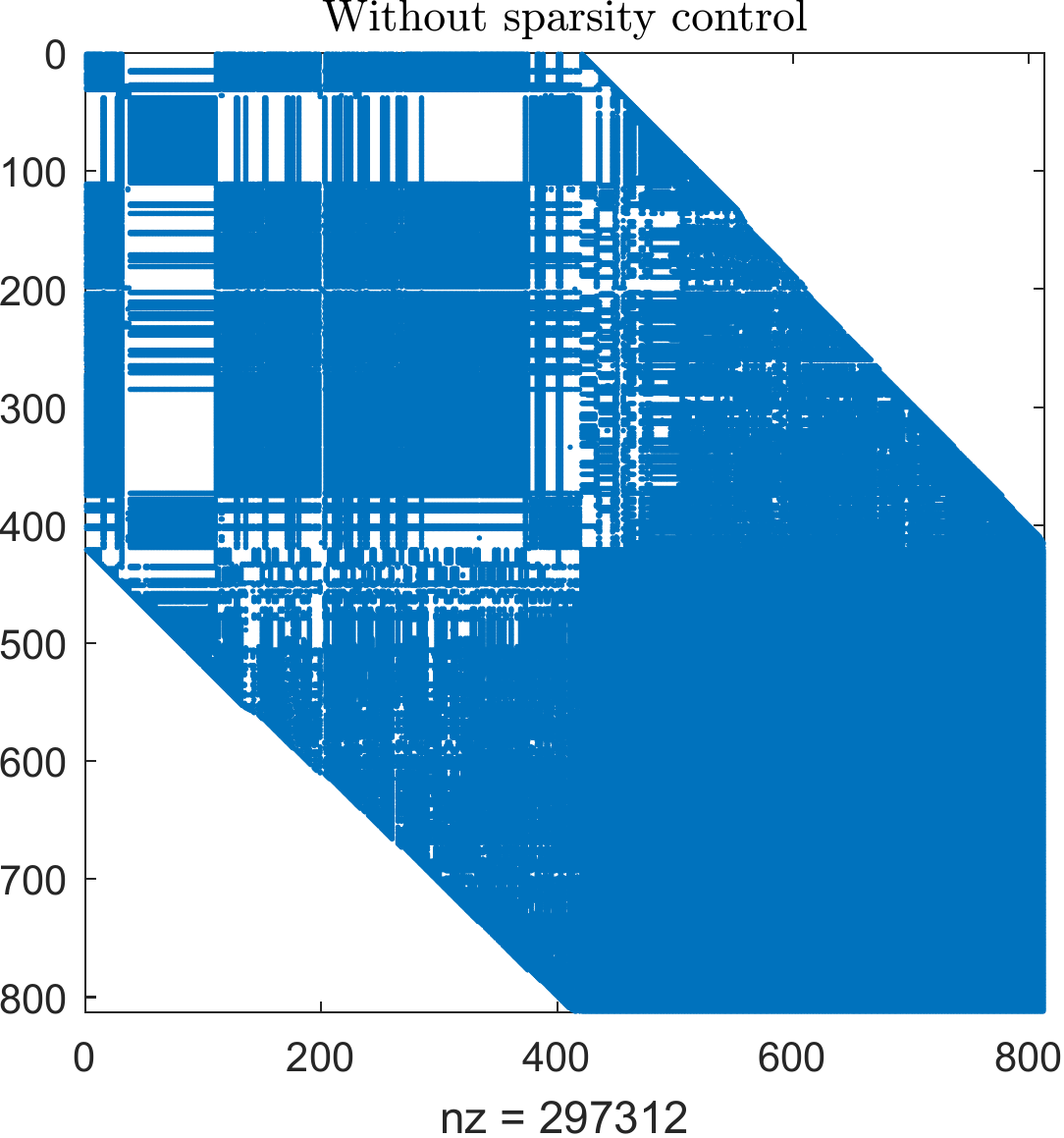}
    \end{minipage}
    \hspace{0.03\textwidth}
    \begin{minipage}{0.36\textwidth}
        \centering
        \includegraphics[width=\linewidth]{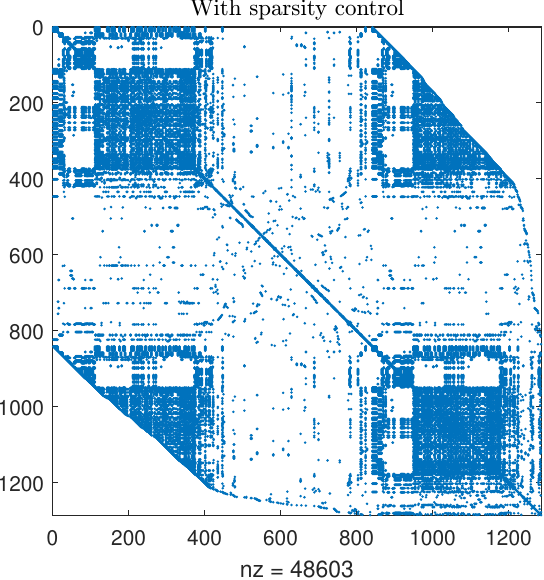}
    \end{minipage}
    \caption{Sparsity patterns of $\tilde{G}+\tilde{C}$ for the two-point reduced models of \texttt{AAADC\_net76}. The left panel shows the default FlexRC model with $297312$ nonzeros, and the right panel shows FlexRC-SC with $48603$ nonzeros.}
    \label{fig:RC AAADC net76 spy}
\end{figure*}

With sparsity control, less fill-in is introduced in the elimination step. However, if TurboMOR-RC is applied directly, the reduced model can have a large order, and the blocks other than the upper-left Schur-complement block can still be dense. The port-reduction step in FlexRC can not only reduce the extra order introduced by sparsity control but also preserve the sparsity. As a result, FlexRC-SC generates a reduced model that is more efficient in the subsequent transient simulation. 

\section{Conclusion}
\label{sec:conclusion}

In this paper, we proposed FlexRC, a flexible multi-point model order reduction method for large-scale RC networks with many ports. FlexRC allows the user to specify the frequency points to improve accuracy for the target response. A key novelty of FlexRC is the port-reduction technique: by adjusting the port-reduction tolerance, FlexRC can reduce the model order and improve the efficiency of subsequent transient simulation with only a slight accuracy loss. We discussed passivity under port-reduction perturbations. For the exact reduction, passivity follows from the congruence structure of the construction. When port reduction is applied, the reduced capacitance matrix is symmetrically perturbed, and passivity is preserved if this perturbation is sufficiently small. We also analyzed moment matching and provided a conservative error estimate for port reduction. Beyond the standard moment-matching analysis, we also considered the case where the conductance matrix is singular and proved a Laurent moment matching result. Numerical experiments on industrial RC examples and IBM power-grid examples show that the flexible framework of FlexRC can produce reduced models with favorable transient simulation efficiency. The experiments also demonstrate the benefits of the sparsity-control variant. 

\section*{Acknowledgments}
We sincerely acknowledge the data support and other technical assistance provided by Huada Empyrean Software Company Ltd. We used ChatGPT as an AI writing aid to refine the manuscript and enhance its readability. All technical content, theoretical analysis, and conclusions were developed and verified by the authors without AI assistance.

\appendices
\section{Proofs}

\subsection{Proof of \propref{prop:block coupling relation}}
\label{app:proof block coupling relation}
\begin{proof}
    From the construction of $V_2$, we have 
    \begin{align*}
        \hat{V}_2 D = \inv{\left(G_{\i} + s_2 C_{\i}\right)} B_{\i}, \qquad V_2 = \hat{V}_2\inv{T_2}. 
    \end{align*}
    Hence 
    \begin{align*}
        \left(G_{\i} + s_2 C_{\i}\right) V_2 = B_{\i} \inv{D} \inv{T_2}. 
    \end{align*}
    Thus $\spanspace{\left(G_{\i} + s_2 C_{\i}\right) V_2} \subseteq \spanspace{B_{\i}}$. The equivalent tail basis satisfies $\tran{\tilde{V}_{\mathrm{tail}}}B_{\i} = 0$. Therefore, the above inclusion gives 
    \begin{align*}
        \tran{\tilde{V}_{\mathrm{tail}}} \left(G_{\i} + s_2 C_{\i}\right) V_2 = 0, 
    \end{align*}
    or equivalently 
    \begin{align*}
        \tran{\tilde{V}_{\mathrm{tail}}}G_{\i}V_2 + s_2\tran{\tilde{V}_{\mathrm{tail}}}C_{\i}V_2 = 0. 
    \end{align*}
    By the block definitions, these two terms are the lower-left coupling blocks $\tilde{G}_{\tail,2}$ and $\tilde{C}_{\tail,2}$. Hence $\tilde{G}_{\tail,2} = -s_2\tilde{C}_{\tail,2}$. 
\end{proof}

\subsection{Proof of \thmref{thm:internal perturbation bound}}
\label{app:proof internal perturbation bound}
\begin{proof}
    Let $A_{\omega} \define G_{\i} + \imaginary \omega C_{\i}$. Since 
    \begin{align*}
        \perturbed{H}_{\i}(\imaginary \omega) - H_{\i}(\imaginary \omega) = \tran{\Delta}\inv{A_{\omega}}\tilde{C}_{\c}
        + \tran{\tilde{C}_{\c}}\inv{A_{\omega}}\Delta + \tran{\Delta}\inv{A_{\omega}}\Delta, 
    \end{align*}
    we have 
    \begin{align*}
        & \Frobenius{\perturbed{H}_{\i}(\imaginary \omega) - H_{\i}(\imaginary \omega)} \\ 
        & \quad \leq \twonorm{\inv{A_{\omega}}} \Frobenius{\Delta}
        \left( 2 \Frobenius{\tilde{C}_{\c}} + \Frobenius{\Delta} \right). 
    \end{align*}
    Moreover, for any nonzero $x \in \complex^{N-p}$, 
    \begin{align*}
        \twonorm{A_{\omega}x} \twonorm{x}
        & \geq \left| \conjtran{x}A_{\omega}x \right| 
        \geq \Re\left( \conjtran{x}A_{\omega}x \right) \\ 
        & = \conjtran{x}G_{\i}x \geq \lambda_{\min}(G_{\i}) \twonorm{x}^2, 
    \end{align*}
    Hence $\twonorm{A_{\omega}x} \geq \lambda_{\min}(G_{\i})\twonorm{x}$ for any nonzero $x$. This gives 
    \begin{align*}
        \twonorm{\inv{A_{\omega}}} \leq \dfrac{1}{\lambda_{\min}(G_{\i})}. 
    \end{align*}
    Therefore, 
    \begin{align*}
        & \Frobenius{\perturbed{H}_{\i}(\imaginary \omega) - H_{\i}(\imaginary \omega)} \\ 
        \leq\ & \dfrac{1}{\lambda_{\min}(G_{\i})} \Frobenius{\Delta} \left( 2 \Frobenius{\tilde{C}_{\c}} + \Frobenius{\Delta} \right). 
    \end{align*}

    Let $c_j$ be the $j$th column of $\tilde{C}_{\c}$. Since $c_j$ is real and 
    \begin{align*}
        \Re\left( \inv{A_{\omega}} \right)
        & = \dfrac{\inv{A_{\omega}} + \invconjtran{A_{\omega}}}{2} \\
        & = \invconjtran{A_{\omega}} G_{\i} \inv{A_{\omega}}, 
    \end{align*}
    we obtain 
    \begin{align*}
        \left| \tran{c_j}\inv{A_{\omega}}c_j \right|
        & \geq \Re\left( \tran{c_j}\inv{A_{\omega}}c_j \right) \\ 
        & = \conjtran{\left( \inv{A_{\omega}}c_j \right)}G_{\i}
        \left( \inv{A_{\omega}}c_j \right) \\ 
        & \geq \lambda_{\min}(G_{\i}) \twonorm{\inv{A_{\omega}}c_j}^2 \\ 
        & \geq \dfrac{\lambda_{\min}(G_{\i})}{\twonorm{A_{\omega}}^2} \twonorm{c_j}^2. 
    \end{align*}
    The last inequality follows from $\twonorm{c_j} = \twonorm{A_{\omega}\inv{A_{\omega}}c_j} \leq \twonorm{A_{\omega}}\twonorm{\inv{A_{\omega}}c_j}$. Therefore, 
    \begin{align*}
        \Frobenius{H_{\i}(\imaginary \omega)}
        & \geq \sqrt{\sum_{j = 1}^p \left| \tran{c_j}\inv{A_{\omega}}c_j \right|^2} \\ 
        & \geq \dfrac{\lambda_{\min}(G_{\i})}{\twonorm{A_{\omega}}^2}
        \sqrt{\sum_{j = 1}^p \twonorm{c_j}^4} \\ 
        & \geq \dfrac{\lambda_{\min}(G_{\i})}{\sqrt{p} \twonorm{A_{\omega}}^2}
        \Frobenius{\tilde{C}_{\c}}^2. 
    \end{align*}
    Hence, 
    \begin{align*}
        \delta_{\i}(\imaginary \omega)
        & \leq \sqrt{p} \dfrac{\twonorm{A_{\omega}}^2}{\lambda_{\min}(G_{\i})^2}
        \dfrac{\Frobenius{\Delta}}{\Frobenius{\tilde{C}_{\c}}} \left( 2 + \dfrac{\Frobenius{\Delta}}{\Frobenius{\tilde{C}_{\c}}} \right) \\
        & \leq \sqrt{p} \dfrac{\twonorm{A_{\omega}}^2}{\lambda_{\min}(G_{\i})^2}
        \epsilon ( 2 + \epsilon ). 
    \end{align*}
    This completes the proof. 
\end{proof}

\bibliographystyle{IEEEtran}
\bibliography{ref}

\end{document}